\definecolor{Myblue}{rgb}{0,0,0.6}
\newcommand{\C}{\mathbbm{C}}
\newcommand{\PP}{\mathbbm{P}}
\newcommand{\R}{\mathbbm{R}}
\newcommand{\Z}{\mathbbm{Z}}
\def\1{\ifmmode\mathrm{1\!l}\else\mbox{\(\mathrm{1\!l}\)}\fi}
\newcommand{\be}{\begin{equation}}
\newcommand{\ee}{\end{equation}}
\newcommand{\bes}{\begin{equation*}}
\newcommand{\ees}{\end{equation*}}
\newcommand{\id}{\text{id}}
\newcommand{\Aa}{A_{\infty}}
\newcommand{\Ainf}{A_{\infty}}
\def\Z{{\mathbb Z}}
\def\id{{\rm id}}
\def\Wcal{{\mathcal W}}
\def\mathbi#1{\textbf{\em #1}}
\def\gbold#1{\mbox{\boldmath$#1$}}
\newcommand{\Ds}{\mathscr{D}}
\newcommand{\Es}{\mathscr{E}}
\newcommand{\Ms}{\mathscr{M}}
\newcommand{\Os}{\mathscr{O}}
\newcommand{\Ts}{\mathscr{T}}
\newcommand{\Ld}{\mathbf{L}}
\newcommand{\Rd}{\mathbf{R}}
\newcommand{\Dd}{\mathbf{D}}
\newcommand{\fl}{\rightarrow}
\newcommand\umod{\operatorname{mod \textendash \!}}
\newcommand\uEnd{\operatorname{End}}
\newcommand\uCoh{\operatorname{Coh}}
\newcommand\uExt{\operatorname{Ext}}
\newcommand\uHom{\operatorname{Hom}}
\newcommand\utr{\operatorname{tr}}
\newcommand\uU{\operatorname{U}}
\newcommand\utot{\operatorname{tot}}
\theoremstyle{plain}
 \newtheorem{thm}{Theorem}[section]
 \newtheorem{prop}[thm]{Proposition}
 \theoremstyle{definition}
 \theoremstyle{remark}
\numberwithin{equation}{section}
\numberwithin{figure}{section}
\begin{document}

\title{Remarks on quiver gauge theories from open topological string theory}
\author{Nils Carqueville$^*$ \quad Alexander  Quintero V\'{e}lez$^\dagger$
\\[0.5cm]
 \normalsize{\tt \href{mailto:nils.carqueville@physik.uni-muenchen.de}{nils.carqueville@physik.uni-muenchen.de}} \quad
  \normalsize{\tt \href{mailto:a.velez@maths.gla.ac.uk}{a.velez@maths.gla.ac.uk}}\\[0.1cm]
  {\normalsize\slshape $^*$Arnold Sommerfeld Center for Theoretical Physics, }\\[-0.1cm]
  {\normalsize\slshape LMU M\"unchen, Theresienstr.~37, D-80333 M\"unchen}\\[-0.1cm]
  {\normalsize\slshape $^*$Excellence Cluster Universe, Boltzmannstr.~2, D-85748 Garching}\\[0.1cm]
  {\normalsize\slshape $^\dagger$Department of Mathematics, University of Glasgow,}\\[-0.1cm]
  {\normalsize\slshape Glasgow, G12\,8QW, United Kingdom}\\[-0.1cm]
}
\date{}
\maketitle

\vspace{-11.5cm}
\hfill {\scriptsize LMU-ASC 60/09}

\vspace{10.5cm}

\begin{abstract}
We study effective quiver gauge theories arising from a stack of D3-branes on certain Calabi-Yau singularities. Our point of view is a first principle approach via open topological string theory. This means that we construct the natural $\Aa$-structure of open string amplitudes in the associated D-brane category. Then we show that it precisely reproduces the results of the method of brane tilings, without having to resort to any effective field theory computations. In particular, we prove a general and simple formula for effective superpotentials. 
\end{abstract}

\section{Introduction and summary}\label{introduction}

In this note we wish to revisit and clarify the relation of two approaches to certain quiver gauge theories from string theory. Common to both approaches is the setting of type IIB string theory on $\R^{3,1}\times X$ where $X$ is a Calabi-Yau threefold. One then considers a stack of Minkowski space-filling D3-branes that from the point of view of $X$ are placed at a singular point obtained by shrinking a complex surface $Z\subset X$ to zero size, so that~$X$ can be identified with the total space of the canonical line bundle on~$Z$. The effective low energy field theory of this arrangement is an $\mathcal N=1$ supersymmetric quiver gauge theory~\cite{dm9603167}. 

In order to extract the data of the effective quiver gauge theory from the geometry of $Z$, several procedures have been developed in~\cite{fhh0003085, hk0503149, fhkvw0504110, hetal0505211, fhh0104259, g0807.3012}, most notably the fast inverse algorithm and the brane tiling method. These have been argued to be equivalent and hence we may restrict our attention to the more elegant latter method. As we will recall below, once a brane tiling associated to a given surface~$Z$ has been constructed, it is easy to read off the quiver gauge theory data. More precisely, one obtains the quiver that describes the field content of the effective theory as well as its F-term superpotential $W$. 

\medskip

Independently of the above developments, recent years have seen huge conceptual progress in the understanding of the sector of chiral primaries in type~II string theory~\cite{d0011, al0104, s9902, hhp0803.2045}. This sector is equivalent to the topological B-twist~\cite{w9112} of the associated sigma model with target~$X$; in particular, the effective superpotential is among the data that can be obtained from topologically twisted theories~\cite{bdlr9906200}. The D-branes and open strings of such models are described by the objects and morphisms of the bounded derived category $\Dd^b(\uCoh(X))$ of coherent sheaves on~$X$. 

In order to capture the full structure of open topological string theory on~$X$ one however needs more than just its derived category. Indeed, it was shown in~\cite{hm0006120, hll0402, c0412149} that the amplitudes
$$
W_{i_{0}\ldots i_{k}} \sim \Big\langle \psi_{i_0}\psi_{i_1}\psi_{i_2}\int\psi_{i_3}^{(1)}\ldots \int\psi_{i_{k}}^{(1)}  \Big\rangle_{\text{disk}}
$$
of open string states $\psi_i$ and their integrated descendants $\int\psi_{i}^{(1)}$ are subject to certain constraints. These constraints follow from BRST symmetry and Ward identities and precisely encode a cyclic, unital and minimal $\Aa$-structure on the D-brane category (which in our case is a subcategory of $\Dd^b(\uCoh(X))$). In terms of the amplitudes $W_{i_{0}\ldots i_{k}}$ this basically means that they have a (graded) cyclic symmetry and that they obey the $\Aa$-relations
\begin{equation}\label{AinfQ}
\sum_{\substack{r\geq 0,s\geq 1, \\ r+s\leq k}}(-1)^{|\psi_{i_1}|+\ldots+|\psi_{i_r}|+r} Q^i_{i_1\ldots i_r j i_{r+s+1}\ldots i_k} Q^j_{i_{r+1}\ldots i_{r+s}} = 0
\end{equation}
for all $k\geq 1$, where
$$
Q^{i}_{i_{0}\ldots i_{k}} = \omega^{ij}W_{j i_{0}\ldots i_{k}}
$$
is defined via the inverse $\omega^{ij}$ of the topological metric $\omega_{ij}=\langle\psi_{i}\psi_{j}\rangle_{\text{disk}}$. Following~\cite{l0507222} we may hence identify any cyclic, unital and minimal $\Aa$-algebra (or $\Aa$-category in the case of many branes) with an open topological string theory and vice versa. This means that everything that follows from this mathematical structure may be viewed as derived from first principles. 

It is a fact that the $\Aa$-structure encoded in the ``structure constants'' $Q^{i}_{i_{0}\ldots i_{k}}$ or in the amplitudes $W_{i_{0}\ldots i_{k}}$ and the topological metric can be explicitly constucted for any open topological string theory we know of. This is true not only for models with non-compact target (relevant for quiver gauge theories) but also compact theories such as sigma models and Landau-Ginzburg models~\cite{c0904.0862, l0107, addf0404, a0703279, t0107195}; because of its fundamental nature, the $\Aa$-approach can in principle be applied to any D-brane in any open topological string theory. Given this algebraic structure, one immediately obtains the effective F-term superpotential perturbatively as from the worldsheet perspective it is simply the generating function
\begin{equation}\label{Winfty}
\Wcal_{\infty} =\sum_{k \geq 2} \sum_{i_0,\ldots,i_k}  \frac{W_{i_{0}\ldots i_{k}} }{k+1}u_{i_{0}}\ldots u_{i_{k}}
\end{equation}
in terms of formal variables $u_{i}$ ``dual'' to the fields $\psi_{i}$. 

We believe the approach to open topological string theory and its effective field theories via $\Aa$-algebras to be conceptually very clear. However, from the point of view of efficiently computing superpotentials there are several other powerful techniques. In the case of compact models these include methods relying on geometric aspects and mirror symmetry, renormalisation group flows, or abstract deformation theory; for recent results in these directions see e.\,g.~\cite{ghkk0811.2996, ks0805.1013, ab0909.2245, bw0812.3397, bbg0704.2666, ahmm0901.2937, w0904.4905, mw0709.4028, w0605162, js0904.4674, js0808.0761, ahjmms0909.1842, dgjt0203173, gj0608027, ghkk0909.2025, ghkk0912.3250}. 

\medskip

\begin{figure}
\begin{align*}
&\begin{tikzpicture}[scale=0.5,baseline,inner sep=0.7mm]
\clip (0.3,0.3) rectangle (3.7,3.7);
\node (p0) at (2,2) [circle,draw=black,fill=white] {};
\node (p1) at (1,1) [circle,draw=black,fill=black] {};
\node (p2) at (3,2) [circle,draw=black,fill=black] {};
\node (p3) at (2,3) [circle,draw=black,fill=black] {};
\draw[-, very thick] (p1) -- (p2) -- (p3) -- (p1); 
\end{tikzpicture}
\begin{tikzpicture}[scale=0.5,baseline,inner sep=0.7mm]
\clip (0.3,0.3) rectangle (3.7,3.7);
\node (p0) at (2,2) [circle,draw=black,fill=white] {};
\node (p1) at (1,2) [circle,draw=black,fill=black] {};
\node (p2) at (2,1) [circle,draw=black,fill=black] {};
\node (p3) at (3,2) [circle,draw=black,fill=black] {};
\node (p4) at (2,3) [circle,draw=black,fill=black] {};
\draw[-, very thick] (p1) -- (p2) -- (p3) -- (p4) -- (p1); 
\end{tikzpicture}
\begin{tikzpicture}[scale=0.5,baseline,inner sep=0.7mm]
\clip (0.3,0.3) rectangle (3.7,3.7);
\node (p0) at (2,2) [circle,draw=black,fill=white] {};
\node (p1) at (2,1) [circle,draw=black,fill=black] {};
\node (p2) at (3,2) [circle,draw=black,fill=black] {};
\node (p3) at (2,3) [circle,draw=black,fill=black] {};
\node (p4) at (1,3) [circle,draw=black,fill=black] {};
\draw[-, very thick] (p1) -- (p2) -- (p3) -- (p4) -- (p1); 
\end{tikzpicture}
\begin{tikzpicture}[scale=0.5,baseline,inner sep=0.7mm]
\clip (0.3,0.3) rectangle (3.7,3.7);
\node (p0) at (2,2) [circle,draw=black,fill=white] {};
\node (p1) at (2,1) [circle,draw=black,fill=black] {};
\node (p2) at (1,3) [circle,draw=black,fill=black] {};
\node (p3) at (2,3) [circle,draw=black,fill=black] {};
\node (p4) at (3,3) [circle,draw=black,fill=black] {};
\draw[-, very thick] (p1) -- (p2) -- (p3) -- (p4) -- (p1); 
\end{tikzpicture}
\begin{tikzpicture}[scale=0.5,baseline,inner sep=0.7mm]
\clip (0.3,0.3) rectangle (3.7,3.7);
\node (p0) at (2,2) [circle,draw=black,fill=white] {};
\node (p1) at (2,1) [circle,draw=black,fill=black] {};
\node (p2) at (3,2) [circle,draw=black,fill=black] {};
\node (p3) at (2,3) [circle,draw=black,fill=black] {};
\node (p4) at (1,3) [circle,draw=black,fill=black] {};
\node (p5) at (1,2) [circle,draw=black,fill=black] {};
\draw[-, very thick] (p1) -- (p2) -- (p3) -- (p4) -- (p5) -- (p1); 
\end{tikzpicture}
\begin{tikzpicture}[scale=0.5,baseline,inner sep=0.7mm]
\clip (0.3,0.3) rectangle (3.7,3.7);
\node (p0) at (2,2) [circle,draw=black,fill=white] {};
\node (p1) at (2,1) [circle,draw=black,fill=black] {};
\node (p2) at (3,3) [circle,draw=black,fill=black] {};
\node (p3) at (2,3) [circle,draw=black,fill=black] {};
\node (p4) at (1,3) [circle,draw=black,fill=black] {};
\node (p5) at (1,2) [circle,draw=black,fill=black] {};
\draw[-, very thick] (p1) -- (p2) -- (p3) -- (p4) -- (p5) -- (p1); 
\end{tikzpicture}
\begin{tikzpicture}[scale=0.5,baseline,inner sep=0.7mm]
\clip (0.3,0.3) rectangle (3.7,3.7);
\node (p0) at (2,2) [circle,draw=black,fill=white] {};
\node (p1) at (2,1) [circle,draw=black,fill=black] {};
\node (p2) at (3,1) [circle,draw=black,fill=black] {};
\node (p3) at (3,2) [circle,draw=black,fill=black] {};
\node (p4) at (2,3) [circle,draw=black,fill=black] {};
\node (p5) at (1,3) [circle,draw=black,fill=black] {};
\node (p6) at (1,2) [circle,draw=black,fill=black] {};
\draw[-, very thick] (p1) -- (p2) -- (p3) -- (p4) -- (p5) -- (p6) -- (p1); 
\end{tikzpicture}
\begin{tikzpicture}[scale=0.5,baseline,inner sep=0.7mm]
\clip (0.3,0.3) rectangle (3.7,3.7);
\node (p0) at (2,2) [circle,draw=black,fill=white] {};
\node (p1) at (2,1) [circle,draw=black,fill=black] {};
\node (p2) at (3,2) [circle,draw=black,fill=black] {};
\node (p3) at (3,3) [circle,draw=black,fill=black] {};
\node (p4) at (2,3) [circle,draw=black,fill=black] {};
\node (p5) at (1,3) [circle,draw=black,fill=black] {};
\node (p6) at (1,2) [circle,draw=black,fill=black] {};
\draw[-, very thick] (p1) -- (p2) -- (p3) -- (p4) -- (p5) -- (p6) -- (p1); 
\end{tikzpicture}
\\
&\begin{tikzpicture}[scale=0.5,baseline,inner sep=0.7mm]
\clip (0.3,0.3) rectangle (3.7,4.7);
\node (p0) at (2,3) [circle,draw=black,fill=white] {};
\node (p1) at (2,2) [circle,draw=black,fill=black] {};
\node (p2) at (3,4) [circle,draw=black,fill=black] {};
\node (p3) at (2,4) [circle,draw=black,fill=black] {};
\node (p4) at (1,4) [circle,draw=black,fill=black] {};
\node (p5) at (1,3) [circle,draw=black,fill=black] {};
\node (p6) at (1,2) [circle,draw=black,fill=black] {};
\draw[-, very thick] (p1) -- (p2) -- (p3) -- (p4) -- (p5) -- (p6) -- (p1); 
\end{tikzpicture}
\begin{tikzpicture}[scale=0.5,baseline,inner sep=0.7mm]
\clip (0.3,0.3) rectangle (3.7,4.7);
\node (p0) at (2,3) [circle,draw=black,fill=white] {};
\node (p1) at (1,1) [circle,draw=black,fill=black] {};
\node (p2) at (3,4) [circle,draw=black,fill=black] {};
\node (p3) at (2,4) [circle,draw=black,fill=black] {};
\node (p4) at (1,4) [circle,draw=black,fill=black] {};
\node (p5) at (1,3) [circle,draw=black,fill=black] {};
\node (p6) at (1,2) [circle,draw=black,fill=black] {};
\draw[-, very thick] (p1) -- (p2) -- (p3) -- (p4) -- (p5) -- (p6) -- (p1); 
\end{tikzpicture}
\begin{tikzpicture}[scale=0.5,baseline,inner sep=0.7mm]
\clip (0.3,0.3) rectangle (3.7,4.7);
\node (p0) at (2,3) [circle,draw=black,fill=white] {};
\node (p1) at (1,2) [circle,draw=black,fill=black] {};
\node (p2) at (2,2) [circle,draw=black,fill=black] {};
\node (p3) at (3,3) [circle,draw=black,fill=black] {};
\node (p4) at (3,4) [circle,draw=black,fill=black] {};
\node (p5) at (2,4) [circle,draw=black,fill=black] {};
\node (p6) at (1,4) [circle,draw=black,fill=black] {};
\node (p7) at (1,3) [circle,draw=black,fill=black] {};
\draw[-, very thick] (p1) -- (p2) -- (p3) -- (p4) -- (p5) -- (p6) -- (p7) -- (p1); 
\end{tikzpicture}
\begin{tikzpicture}[scale=0.5,baseline,inner sep=0.7mm]
\clip (0.3,0.3) rectangle (3.7,4.7);
\node (p0) at (2,3) [circle,draw=black,fill=white] {};
\node (p1) at (1,1) [circle,draw=black,fill=black] {};
\node (p2) at (2,2) [circle,draw=black,fill=black] {};
\node (p3) at (3,4) [circle,draw=black,fill=black] {};
\node (p4) at (2,4) [circle,draw=black,fill=black] {};
\node (p5) at (1,4) [circle,draw=black,fill=black] {};
\node (p6) at (1,3) [circle,draw=black,fill=black] {};
\node (p7) at (1,2) [circle,draw=black,fill=black] {};
\draw[-, very thick] (p1) -- (p2) -- (p3) -- (p4) -- (p5) -- (p6) -- (p7) -- (p1); 
\end{tikzpicture}
\begin{tikzpicture}[scale=0.5,baseline,inner sep=0.7mm]
\clip (0.3,0.3) rectangle (3.7,4.7);
\node (p0) at (2,3) [circle,draw=black,fill=white] {};
\node (p1) at (1,2) [circle,draw=black,fill=black] {};
\node (p2) at (2,2) [circle,draw=black,fill=black] {};
\node (p3) at (3,2) [circle,draw=black,fill=black] {};
\node (p4) at (3,3) [circle,draw=black,fill=black] {};
\node (p5) at (3,4) [circle,draw=black,fill=black] {};
\node (p6) at (2,4) [circle,draw=black,fill=black] {};
\node (p7) at (1,4) [circle,draw=black,fill=black] {};
\node (p8) at (1,3) [circle,draw=black,fill=black] {};
\draw[-, very thick] (p1) -- (p2) -- (p3) -- (p4) -- (p5) -- (p6) -- (p7) -- (p8) -- (p1); 
\end{tikzpicture}
\begin{tikzpicture}[scale=0.5,baseline,inner sep=0.7mm]
\clip (0.3,0.3) rectangle (3.7,4.7);
\node (p0) at (2,3) [circle,draw=black,fill=white] {};
\node (p1) at (1,1) [circle,draw=black,fill=black] {};
\node (p2) at (2,2) [circle,draw=black,fill=black] {};
\node (p3) at (3,3) [circle,draw=black,fill=black] {};
\node (p4) at (3,4) [circle,draw=black,fill=black] {};
\node (p5) at (2,4) [circle,draw=black,fill=black] {};
\node (p6) at (1,4) [circle,draw=black,fill=black] {};
\node (p7) at (1,3) [circle,draw=black,fill=black] {};
\node (p8) at (1,2) [circle,draw=black,fill=black] {};
\draw[-, very thick] (p1) -- (p2) -- (p3) -- (p4) -- (p5) -- (p6) -- (p7) -- (p8) -- (p1); 
\end{tikzpicture}
\begin{tikzpicture}[scale=0.5,baseline,inner sep=0.7mm]
\clip (0.3,-0.3) rectangle (3.7,4.7);
\node (p0) at (2,3) [circle,draw=black,fill=white] {};
\node (p1) at (1,0) [circle,draw=black,fill=black] {};
\node (p2) at (2,2) [circle,draw=black,fill=black] {};
\node (p3) at (3,4) [circle,draw=black,fill=black] {};
\node (p4) at (2,4) [circle,draw=black,fill=black] {};
\node (p5) at (1,4) [circle,draw=black,fill=black] {};
\node (p6) at (1,3) [circle,draw=black,fill=black] {};
\node (p7) at (1,2) [circle,draw=black,fill=black] {};
\node (p8) at (1,1) [circle,draw=black,fill=black] {};
\draw[-, very thick] (p1) -- (p2) -- (p3) -- (p4) -- (p5) -- (p6) -- (p7) -- (p8) -- (p1); 
\end{tikzpicture}
\begin{tikzpicture}[scale=0.5,baseline,inner sep=0.7mm]
\clip (0.3,-0.3) rectangle (4.3,4.7);
\node (p0) at (2,3) [circle,draw=black,fill=white] {};
\node (p1) at (1,1) [circle,draw=black,fill=black] {};
\node (p2) at (2,2) [circle,draw=black,fill=black] {};
\node (p3) at (3,3) [circle,draw=black,fill=black] {};
\node (p4) at (4,4) [circle,draw=black,fill=black] {};
\node (p5) at (3,4) [circle,draw=black,fill=black] {};
\node (p6) at (2,4) [circle,draw=black,fill=black] {};
\node (p7) at (1,4) [circle,draw=black,fill=black] {};
\node (p8) at (1,3) [circle,draw=black,fill=black] {};
\node (p9) at (1,2) [circle,draw=black,fill=black] {};
\draw[-, very thick] (p1) -- (p2) -- (p3) -- (p4) -- (p5) -- (p6) -- (p7) -- (p8) -- (p9) -- (p1); 
\end{tikzpicture}
\end{align*}
\caption{The $16$ equivalence classes of reflexive lattice polygons.} 
\label{refpol} 
\end{figure}
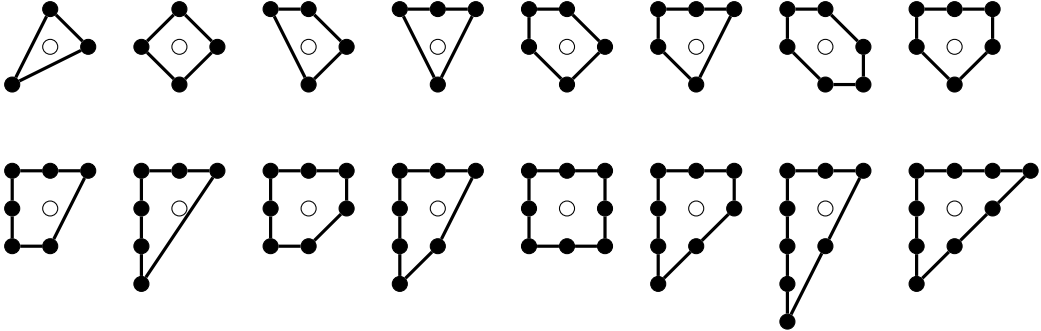

Returning to the subject of this note, it is natural to expect that the effective superpotential $\Wcal_{\infty}$ of the $\Aa$-approach coincides with the quiver gauge theory superpotential obtained by seemingly entirely different means from the geometry of the surface~$Z$ in~$X$ using the brane tiling method. This is the idea behind~\cite{af0506041} where it was checked by explicit computations that the results of the brane tiling approach and the approach via $\Aa$-algebras agree for the special two cases where the surface~$Z$ is given by either $\PP^2$ or the first del Pezzo surface $\mathrm{dP}_{1}$. 

In the present note we generalise this to all weak toric Fano surfaces~$Z$, i.\,e.~to all toric surfaces represented by reflexive polytopes. It is a classical result that there are exactly $16$ equivalence classes of reflexive polygons in the plane, shown in figure~\ref{refpol}. Since the $\Aa$-structure of a D-brane category may be viewed as the defining property of the associated open topological string theory, our result is conceptually relevant as it shows that the brane tiling method follows rigorously from first principles. 

We stress that it is not necessary to verify the above correspondence by comparing the explicit computations on both sides for each individual example of a surface. Instead, one may immediately extract an $\Aa$-structure from the superpotential $W$ of any brane tiling, and this $\Aa$-structure contains all the information about the effective quiver gauge theory. As we explain in section~\ref{sec:dimer} this is done most naturally by rephrasing the $\Aa$-relations~\eqref{AinfQ} in terms of $W$ and the topological metric, i.\,e.~the Serre pairing on the Calabi-Yau~$X$, using the language of formal non-commutative geometry~\cite{kontsevich1993} pioneered in the string theory literature in~\cite{l0507222}. 

In a second step we then show in section~\ref{sec:derivedcategory} that the $\Aa$-structure encoded in~$W$ is the \textit{same} as the natural $\Aa$-structure of branes wrapping the surface~$Z$. We denote this subsector of the open topological string theory by $\Dd^b_{Z}(\uCoh(X))$. To construct the $\Aa$-structure, we first use the elegant method of~\cite{kellerainfinrep} to find the higher products on $\Dd^b(\uCoh(Z))$ and then extend~\cite{AM0405134, af0506041, s0702539, b0801.3499, cfikv0110028, s0907.2063} it to branes in $\Dd^b_{Z}(\uCoh(X))$. Along the way we provide a general and clear proof of the explicit formula
\be\label{Wrel}
\Wcal_{\infty} = \sum_{j} r_{j} \rho_{j} 
\ee
where $r_{j}$ are the relations of the quiver associated to~$Z$, i.\,e.~polynomials of elements in $\text{Ext}^1_{Z}$ of a certain algebra describing fractional branes, and the $\rho_{j}$ are dual to $r_{j}$ so that together with the basis of $\text{Ext}^1_{Z}$ they span the space $\text{Ext}^1_{X}$ on~$X$.\footnote{The meaning of this will of course be made precise below in section~\ref{sec:derivedcategory}.} This formula was conjectured and checked in examples in~\cite{af0506041} and it is also closely related to the work of~\cite{s0702539}. 

We remark that the outcome~\eqref{Wrel} for the superpotential $\Wcal_{\infty}$ is a very simple expression. This means that while our conceptual derivation of the identity $\Wcal_{\infty}=W$ may not make the extraction of the quiver gauge theory data from a given geometry easier still, the result for $\Wcal_{\infty}$ from the $\Aa$-approach also does not rank behind its brane tiling equivalent~\eqref{Wbt} for $W$ in terms of simplicity. 

\begin{figure}
\centering
\begin{tikzpicture}[scale=1.0,baseline,implies/.style={double,double equal sign distance,-implies},dot/.style={shape=circle,fill=black,minimum size=2pt,
                inner sep=0pt,outer sep=2pt},>=stealth]

\node (p1) at (0,5) [shape=rectangle,draw=black,fill=white] {$\begin{array}{c} \text{branes on} \\ Z\subset X\end{array}$};
\node (p2) at (0,0) [shape=rectangle,draw=black,fill=white] {$\begin{array}{c} \text{quiver gauge theory} \\ (Q,W)\end{array}$};
\node (p3) at (10,5) [shape=rectangle,draw=black,fill=white] {$\begin{array}{c} \text{D-brane category} \\ \Dd^b_{Z}(\uCoh(X))\end{array}$};
\node (p4) at (5,0) [draw=white,fill=white] {$\Aa^{\text{brane tiling}}$};
\node (p5) at (10,0) [draw=white,fill=white] {$\Aa^{\text{oTST}}$};

\draw[->, very thick] (p1) -- (p2) node[midway,sloped,above] {brane tiling}; 
\draw[->, very thick] (p1) -- (p3) node[midway,sloped,above] {open topological string theory}; 
\draw[->, very thick] (p2) -- (p4) node[midway,sloped,above] {extract}; 
\draw[->, very thick] (p3) -- (p5) node[midway,sloped,above] {minimal model};  
\draw[-, double distance=2pt] (p4) -- (p5); 
 
\end{tikzpicture}
\caption{The $\Aa$-structure $\Aa^{\text{oTST}}$ of the open topological string theoretic D-brane category $\Dd^b_{Z}(\uCoh(X))$ is identical to $\Aa^{\text{brane tiling}}$, the $\Aa$-structure encoded in the brane tiling superpotential~$W$.} 
\label{outline} 
\end{figure}
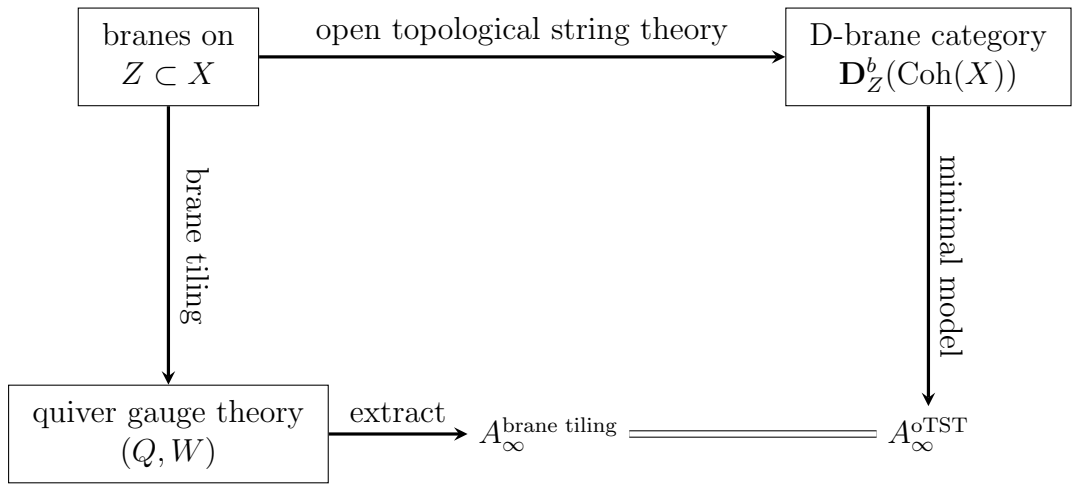

The rest of the present note is organised as follows. In sections~\ref{sec:dimer} and~\ref{sec:derivedcategory} we carry out the analysis outlined above and summarised in figure~\ref{outline}, and in section~\ref{sec:example} we offer a concrete example to complement the abstract argument of the previous two sections. The body of this note is supplemented by several appendices that collect notational conventions, mathematical background material as well as some technical details of our arguments. 

\medskip

\noindent\textbf{Note added. } While we were in the process of developing the results presented in this note into further directions, the preprint~\cite{fu0912.1656} appeared in which  a similar equivalence of $\Aa$-categories is independently proven. The main differences to our approach and proof are as follows. The authors of~\cite{fu0912.1656} present a general, yet from the string theory perspective ad hoc definition of an $\Aa$-structure associated to an arbitrary brane tiling; we on the other hand directly derive it from the superpotential using the formal non-commutative geometry of $\Aa$-algebras. Furthermore, the work of~\cite{fu0912.1656} is purely mathematical and does not discuss the relevance for string theory. We are mainly motivated by the conceptual physical question of how the brane tiling approach to quiver gauge theories can be understood from the principles of open topological string theory. Finally, our proof differs from the one of~\cite{fu0912.1656} in that we emphasise the role of the surface~$Z$ and find the $\Aa$-structure on $\Dd^b_Z(\uCoh(X))$ by extending that of $\Dd^b(\uCoh(Z))$; to construct the latter we do not have to perform explicit calculations since we make use of Keller's higher multiplication theorem~\cite{kellerainfinrep}.

\section[The $\Aa$-structure of brane tiling superpotentials]{The $\boldsymbol{\Aa}$-structure of brane tiling superpotentials}\label{sec:dimer}

Recall our general assumptions:~$Z$ is a toric surface and $X=\utot(\omega_Z)$ is the total space of the canonical bundle of $Z$. As argued in~\cite{hk0503149, fhkvw0504110, hetal0505211}, the data of the effective quiver gauge theory arising from a stack of D-branes wrapping $Z \subset X$ can be described by a certain bipartite graph on a two-torus obtained from the convex polytope of $Z$ called a brane tiling. Such a graph encodes both the quiver and the superpotential, which can be constructed in the following way. The dual graph of the brane tiling is the periodic quiver $Q$. To each vertex $i$ is associated a gauge group $\uU(n_i)$, while each arrow $a\colon i \fl j$ corresponds to a chiral superfield~$\Phi_a$ transforming in the fundamental representation of the gauge group~$\uU(n_{j})$ and the antifundamental representation of $\uU(n_{i})$. The faces of the periodic quiver are the terms in the superpotential. More precisely, the superpotential is given by
$$
\widehat{W} = \sum_{\substack{\partial f = a_{i_0}\ldots a_{i_k}}}(-1)^f \utr(\Phi_{a_{i_0}} \ldots \Phi_{a_{i_k}})
$$    
where $f$ ranges over the faces of $Q$, $\partial f =a_{i_0}\ldots a_{i_k}$ denotes the set of arrows which are contained in the boundary of~$f$, and $(-1)^f$ is $+1$ if $f$ is a white face and $-1$ if~$f$ is a black face. Equivalently, the superpotential can be interpreted entirely in terms of quiver data: instead of $\widehat{W}$ one may consider the corresponding element $W \in \C Q/[\C Q,\C Q]$. Here, $\C Q$ denotes the the path algebra of the quiver~$Q$, and modding out by the commutator space simply accounts for the fact that all terms in~$\widehat W$ are cyclically symmetric. Put differently, $W$ is a non-commutative\footnote{This is in line with the fact that there is also a reason~\cite{hll0402,l0507222} why one may also view the generating function $\mathcal W_{\infty}$ of amplitudes in~\eqref{Winfty} as a non-commutative function of its arguments~$u_{i}$: In the closed sector, the generating function of closed string amplitudes contains the same information as the amplitudes themselves. The reason is that the latter are totally (graded) symmetric. But since open string amplitudes $W_{i_{0}\ldots i_{k}}$ are only cyclically symmetric, only a non-commutative $\mathcal W_{\infty}$ can encode all information on the $W_{i_{0}\ldots i_{k}}$.} cyclic function, i.\,e.~a formal sum of oriented cycles on the quiver $Q$, and it is given by
\begin{equation}\label{Wbt}
W = \sum_{\substack{\partial f = a_{i_0}\ldots a_{i_k}}}(-1)^f a_{i_0} \ldots a_{i_k} \, .
\end{equation}

Now, as explained in detail in the next section, one should think of the arrows $a_1,\ldots,a_n$ in the quiver $Q$ as a basis of the degree $1$ part of $\uExt^{\bullet}_X(S,S)^{\vee}$, where $S$ describes the system of fractional branes that a D$3$-brane decays into when located at the zero section $Z \subset X$. Moreover, the boundary topological metrics of the system induce a non-commutative symplectic form~$\omega$ on $T (\uExt^{\bullet}_X(S,S)^{\vee})/[T (\uExt^{\bullet}_X(S,S)^{\vee}),T (\uExt^{\bullet}_X(S,S)^{\vee})]$ in the sense of \cite{kontsevich1993}, where $T (\uExt^{\bullet}_X(S,S)^{\vee})=\bigoplus_{k\geq 0}(\uExt^{\bullet}_X(S,S)^{\vee})^{\otimes k}$ (see the next section for a precise definition of the tensor product involved). Analogously to the ordinary case, one can define Poisson brackets
$$
\{ F,G \}=\frac{\partial_r F}{\partial u_i} \omega^{ij} \frac{\partial_l G}{\partial u_j}
$$
for cyclic functions~$F$ and ~$G$, where $\omega^{ij}$ is the inverse matrix to $\omega_{ij}$ and $\{u_i\}$ is a homogeneous basis for $\uExt^{\bullet}_X(S,S)^{\vee}$. The subscripts $r$ and $l$ refer to right- and left-differentiation. Viewing the superpotential $W$ as an element of $T (\uExt^{\bullet}_X(S,S)^{\vee})/[T (\uExt^{\bullet}_X(S,S)^{\vee}),T (\uExt^{\bullet}_X(S,S)^{\vee})]$ and using \eqref{Wbt}, one finds that, in fact, $\{W,W\}=0$. Indeed, the only elements appearing in $W$ have degree $1$; thus, in matrix notation the relevant part\footnote{The topological metric has ghost number $3$ and hence induces non-degenerate pairings $\uExt^1_X(S,S) \times \uExt^2_X(S,S) \fl \C$ and $\uExt^0_X(S,S) \times \uExt^3_X(S,S) \fl \C$. That being so, only the $\uExt^1$'s and the $\uExt^2$'s will play a role.} of the bracket $\{W,W \}$ can be written as
$$
\left( \begin{array}{c} \partial_r W/\partial a_1 \\
\vdots \\
\partial_r W/\partial a_n \\ \hline
0 \\
\vdots \\
0 \end{array} \right)^\text{\!\!\!T}  \left(\begin{array}{ccc} \mathbf{0} & \multicolumn{1}{|c}{}&  \begin{array}{ccc} * &   \cdots & * \\
                                                 \vdots &  & \vdots \\
                                                 * & \cdots & * \end{array}  \\ \hline
                                                 \begin{array}{ccc} * &   \cdots & * \\
                                                 \vdots &  & \vdots \\
                                                 * & \cdots & * \end{array}  & \multicolumn{1}{|c}{} & \mathbf{0} \end{array}  \right) \left( \begin{array}{c}\partial_l W/\partial a_1 \\
\vdots \\
\partial_l W/\partial a_n \\ \hline
0 \\
\vdots \\
0  \end{array} \right),
$$
which is clearly zero.

The key observation now is the simple fact that the vanishing of $\{W,W\}$ for an element $W\in T (\uExt^{\bullet}_X(S,S)^{\vee})/[T (\uExt^{\bullet}_X(S,S)^{\vee}),T (\uExt^{\bullet}_X(S,S)^{\vee})]$ is equivalent to the existence of a cyclic $A_{\infty}$-structure on $\uExt^{\bullet}_X(S,S)$. More precisely, if $W=\sum_{k\geq 2} W_{i_{0}\ldots i_{k}}  u_{i_{0}}\ldots u_{i_{k}}$ then the structure constants $Q^{i}_{i_{0}\ldots i_{k}} = \omega^{ij} W_{j i_{0}\ldots i_{k}}$ satisfy the $\Aa$-relations~\eqref{AinfQ} (for an explicit check see e.\,g.~\cite[App.~D]{l0507222}). Since the brane tiling superpotential~$W$ only depends on elements of degree 1, we thus find that it encodes a cyclic, unital and minimal $A_{\infty}$-structure on $\uExt^{\bullet}_X(S,S)$.

The main point of this note is that the above $A_{\infty}$-structure encoded in the brane tiling is not just any accidental such structure, but it is the same as the natural $A_{\infty}$-structure of the category of D-branes on~$X$ wrapping~$Z$. We will show this in the following section, thereby building the conceptual bridge between open topological string theory and quiver gauge theories. 

\section[The $A_{\infty}$-structure of the D-brane category]{The $\boldsymbol{A_{\infty}}$-structure of the D-brane category $\boldsymbol{\mathbf{D}^b_{Z}(\operatorname{Coh}(X))}$}\label{sec:derivedcategory}

In this section we present the first principle approach to quiver gauge theories associated to our class of surfaces~$Z$ via $\Aa$-algebras. We will determine the $A_{\infty}$-structure on the subsector of  open topological string theory consisting of D-branes on $X$ wrapping $Z$, and show that its superpotential $\Wcal_{\infty}$ coincides with the brane tiling superpotential. First, however, we must develop our vocabulary.

We start by recalling the definition of a tilting bundle. Let $X$ be a smooth quasi-projective variety, and let $\Dd^b(\uCoh(X))$ be the bounded derived category of coherent sheaves on $X$.  A coherent sheaf~$\Ts$ on~$X$ is called a \textit{tilting sheaf} (or, when it is locally free, a \textit{tilting bundle}) if:
\begin{enumerate}
\item[(1)]it has no higher self-extensions, i.\,e.~$\uExt_X^i(\Ts,\Ts)=0$ for all $i >0$,
\item[(2)]$\Ts$ generates the derived category $\Dd^b(\uCoh(X))$, i.\,e.~the smallest triangulated subcategory of $\Dd^b(\uCoh(X))$ closed under coproducts that contains~$\Ts$ and is closed under taking direct summands 
is equal to the whole of $\Dd^b(\uCoh(X))$.
\end{enumerate}
Now let $\Dd^b(\umod A)$ denote the bounded derived category of finitely generated right modules over the algebra $A=\uEnd(\Ts)$. Then, the functors $\Rd \uHom^{\bullet}(\Ts,-)$ and $-\otimes_A^{\Ld}\Ts$ define mutually inverse equivalences between $\Dd^b(\uCoh(X))$ and $\Dd^b(\umod A)$. For constructions of tilting bundles and their relation to derived categories we refer to \cite{MR992977, MR509388, MR939472, MR1074777, MR2384610}.

A closely related notion is that of an exceptional collection. Let~$X$ be a smooth projective variety. A coherent sheaf $E$ on~$X$ is said to be {\it exceptional} if $\uExt^k_{X}(E,E)=0$ for all $k \neq 0$, and $\uHom(E,E)=\C$. An ordered set of exceptional sheaves $(E_0,\ldots,E_n)$ is called an {\it exceptional collection} if $\uExt^k_{X}(E_j,E_i)=0$ for $j>i$ and all $k$. Such a sequence is said to be {\it strong} if it satisfies the additional condition $\uExt^k_{X}(E_j,E_i)=0$ for all $i,j$ and for $k \neq 0$. Finally, it is called {\it full} if it generates the category $\Dd^b(\uCoh(X))$.  Thus, each full strong exceptional collection defines a tilting sheaf $\Es=\bigoplus_{i=0}^nE_i$, because the endomorphism algebra of $\Es$ is a triangular algebra (cf.~\cite{K97}). Vice versa, each tilting bundle whose direct summands are line bundles gives rise to a full strong exceptional collection. 

Now let us turn to the brane tiling context. For our notation for quivers, brane tilings, perfect matchings and related things, we refer to appendix~\ref{app:tiling}. Let $A$ be an algebra defined by a quiver $Q=(Q_0,Q_1)$ with superpotential $W$ associated to some consistent brane tiling, see~\eqref{Jacalg}. Denote by $\Ms_{\theta}$ the fine moduli space of $\theta$-stable representations of $A$ of dimension vector $(1,\ldots,1)$ for a generic choice of the GIT parameter $\theta$. (See \cite{MR1315461} for some background.) An important conclusion of \cite{MR2509696} is that this moduli space is a smooth toric Calabi-Yau threefold whose fan is determined by rays corresponding to perfect matchings. Let us recall the construction of a tilting bundle on $\Ms_{\theta}$ following \cite{hhv0602041, bm0909.2013}. First some terminology is required.

A \textit{walk} $\gamma$ in $Q$ is a formal composition $a_1^{\varepsilon(1)} \ldots a_m^{\varepsilon(m)}$ where $a_1,\ldots,a_m \in Q_1$ and $\varepsilon(i) \in \{ \pm 1 \}$. If $\varepsilon(k)=1$ then $a_k$ is a forward arrow in $\gamma$; otherwise $\varepsilon(k)=-1$ and $a_k^{-1}$ is a backward arrow. Given a walk $\gamma$ we set $|\gamma|=\varepsilon(1)a_1 + \ldots + \varepsilon(m)a_m \in \Z^{Q_1}$. For a fixed perfect matching $\pi$ and a walk $\gamma$, let us write $\chi_{\pi}(\gamma)\equiv\chi_{\pi}(w(|\gamma|))$ in the notation of appendix~\ref{app:tiling}.

To each perfect matching one may pick $\theta$ as in~\cite{MR2509696} and associate a prime toric divisor on $\Ms_{\theta}$ as follows:
 $$
 P \ni \pi \mapstochar\longrightarrow D_{\pi} = \{ x \in \Ms_{\theta} \mid \text{$x_a=0$ for $a \in \pi$}\} \, .
 $$
Here by $a\in\pi$ we mean that the arrow~$a$ crosses the perfect matching~$\pi$. If $D_{\pi}$ is a compact toric divisor then the perfect matching $\pi$ is said to be {\it internal}. For any walk $\gamma$, one can define a toric divisor on $\Ms_{\theta}$ by setting $D(\gamma)=\sum_{\pi \in P}\chi_{\pi}(\gamma)D_{\pi}$. We denote by $\Os(\gamma)\equiv\Os(D(\gamma))$ the sheaf corresponding to $D(\gamma)$. 

Now fix a vertex $i_0 \in Q_0$; for any other vertex $i \in Q_0$, choose an arbitrary walk~$\gamma_i$ from~$i_0$ to~$i$. Then there is the following important result.

\begin{thm}[\cite{hhv0602041, bm0909.2013}]\label{thm:tilting}
Let the notation and assumptions be as above. Then $\Ts = \bigoplus_{i \in Q_0}\Os(\gamma_i)$ is a tilting bundle on $\Ms_{\theta}$. In addition, $A \cong \uEnd(\Ts)$.
\end{thm}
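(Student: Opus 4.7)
The plan is to follow the strategy of \cite{hhv0602041, bm0909.2013} rather than to reinvent their technical input; I will indicate the structure of the proof and point to the step where consistency of the brane tiling is decisive.

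First I would verify that each summand $\Os(\gamma_i)$ is well-defined, independent of the chosen walk $\gamma_i$. Two walks $\gamma$ and $\gamma'$ from $i_0$ to $i$ differ by a closed walk $\eta$, which on the two-torus is a $\Z$-linear combination of face boundaries of the periodic quiver $Q$. For each face boundary $\partial f$ and each perfect matching $\pi$, exactly one edge of $\partial f$ lies in $\pi$, so $\chi_\pi(\partial f)=0$; consequently $D(\gamma)=D(\gamma')$ and the sheaf $\Os(\gamma_i)$ is canonical. The same calculation shows that for any arrow $a\colon i\to j$ one has $D(\gamma_j)-D(\gamma_i)=D(a)$, so $a$ defines a canonical monomial section $\varphi(a)\in H^0(\Ms_\theta,\Os(D(\gamma_j)-D(\gamma_i)))=\uHom(\Os(\gamma_i),\Os(\gamma_j))$. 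Multiplying monomials along concatenated arrows extends this to a $\C$-algebra homomorphism $\C Q \to \uEnd(\Ts)$; the F-term relations $\partial W/\partial a$ are sent to zero because two cyclic representatives of the same face, divided by the common arrow $a$, define the same monomial, so $\varphi$ descends to $\varphi\colon A\to \uEnd(\Ts)$.

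The heart of the proof is then to show that $\varphi$ is an isomorphism and that the two tilting axioms hold for $\Ts$. For the first I would decompose $\varphi=\bigoplus_{i,j}\varphi_{ij}$ with $\varphi_{ij}\colon e_j A e_i \to H^0(\Ms_\theta,\Os(D(\gamma_j)-D(\gamma_i)))$ and match paths against monomial sections in toric coordinates on $\Ms_\theta$. Consistency of the brane tiling---equivalently, the Calabi-Yau-3 property of $A$ together with the existence of sufficiently many perfect matchings---then ensures that every monomial section lifts to a path and that two paths give the same section if and only if they differ by F-term relations. The Ext-vanishing $\uExt^k(\Ts,\Ts)=0$ for $k>0$ reduces to the vanishing of $H^k(\Ms_\theta,\Os(D(\gamma_j)-D(\gamma_i)))$, which follows from a standard toric cohomology computation using that these divisors are explicit integer combinations of the $D_\pi$ and that only the non-internal $D_\pi$ are non-compact. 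Finally, generation of $\Dd^b(\uCoh(\Ms_\theta))$ follows once $\varphi$ is an isomorphism and the Ext vanishing holds: the simple $A$-modules at the vertices correspond under $-\otimes_A^{\Ld}\Ts$ to residue sheaves at the torus-fixed points of $\Ms_\theta$, and a sheaf whose image generates a spanning class of $\Dd^b(\uCoh(\Ms_\theta))$ is automatically a generator.

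The main obstacle is the bijection between $e_j A e_i$ and monomial sections of $\Os(D(\gamma_j)-D(\gamma_i))$; this is the sole step where consistency enters non-formally. For it I would invoke the detailed perfect-matching analysis in \cite[\S5--6]{bm0909.2013}, which in turn rests on the consistency-induced finite global dimension of $A$ and on the description of $\Ms_\theta$ as the coherent component of the representation moduli space.
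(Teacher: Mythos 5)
The paper offers no proof of theorem~\ref{thm:tilting}: it is imported wholesale from \cite{hhv0602041, bm0909.2013}, so strictly there is nothing in-paper to compare your argument against, and an outline that organises the strategy of those references --- an algebra map $\C Q\to\uEnd(\Ts)$ annihilating the F-term relations, identification of $e_jAe_i$ with monomial sections, toric cohomology vanishing for the higher Ext's, and generation via the criterion of \cite{MR1996800} --- is the right architecture.

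That said, your first step is wrong as written. You assert $\chi_\pi(\partial f)=0$ on the grounds that exactly one edge of $\partial f$ lies in $\pi$; but if exactly one arrow of $d_2(f)$ crosses an edge of $\pi$, then $\chi_\pi(d_2(f))=1$, not $0$. (Appendix~\ref{app:tiling} contains the same slip; what is actually needed there is only that $\chi_\pi(d_2(f))$ is independent of $f$, so that $\chi_\pi$ descends to $G=\Z^{Q_1}/(d_2(f)-d_2(f'))$.) Consequently a null-homologous closed walk $\eta$ with $|\eta|=\sum_f n_f\,d_2(f)$ has $D(\eta)=\bigl(\sum_f n_f\bigr)\sum_{\pi\in P}D_\pi$, which is in general not even principal --- for $\mathrm{dP}_2$ one computes $\sum_{\pi}D_\pi=5D_0+D_1+\cdots+D_5$, which is not in the image of the character lattice --- and a general closed walk moreover carries a nonzero homology class on the torus. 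So $\Os(\gamma_i)$ is \emph{not} independent of the chosen walk (the footnote in section~\ref{sec:example} says exactly this: other walks give an ``equivalent'', not identical, result), and your identity $D(\gamma_j)-D(\gamma_i)=D(a)$ for an arrow $a\colon i\to j$ fails by the divisor of a closed walk. Since your construction of the map $\C Q\to\uEnd(\Ts)$, and hence everything downstream, is built on that identity, this is a genuine gap rather than a cosmetic one; repairing it is precisely where \cite{bm0909.2013} must work with classes in $G$ and careful degree bookkeeping instead of equalities of divisors. The remaining steps (paths modulo $J_W$ versus monomial sections, $H^{>0}$ vanishing, generation) are correctly identified but deferred wholesale to \cite[\S5--6]{bm0909.2013}, so the proposal is a roadmap rather than a proof.
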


Now let us go back to the specific context of this section.  Let~$Z$ be a weak toric Fano surface, and denote by $X=\utot(\omega_Z)$ the total space of its canonical bundle. As discussed in~\cite{iu0905.0059}, there is a consistent brane tiling corresponding to~$Z$ (see also~\cite{MR2454323, g0807.3012}). Furthermore, the arguments of \cite{MR2509696} imply that $\Ms_{\theta}$ coincides with $X$ for a particular choice of the GIT parameter $\theta$. In light of the above discussion, we deduce that $\Ts = \bigoplus_{i \in Q_0}\Os(\gamma_i)$ is a tilting bundle on $X$ whose direct summands are line bundles. 

To take the next step, we should construct a full strong exceptional collection on $Z$. To this end, let $\iota \colon Z \hookrightarrow  X$ be the inclusion of the zero section. Then it is shown in appendix~\ref{app:tilting} that the restriction $\Es = \iota^* \Ts$ is a tilting bundle on $Z$ whose summands are line bundles. This, in turn, gives rise to a full strong exceptional collection on $Z$. In what follows we denote by $B$ the algebra of endomorphisms of the object $\Es$, i.\,e.~$B=\uEnd(\Es)$.  

We observe next that the algebra $B$ can be naturally viewed as the path algebra of a quiver with relations. In order to establish this, we fix an internal perfect matching $\pi$. Let $Q(\pi) \subset Q$ be the acyclic subquiver obtained by deleting from $Q$ all the arrows corresponding to $\pi$. We regard the path algebra $\C Q(\pi)$ as a subalgebra of $\C Q$. In the notation of appendix~\ref{app:tiling}, let $J_{\pi}$ be the ideal of relations in $\C Q(\pi)$ generated by the Jacobi ideal $J_W$; that is, $J_{\pi}=\C Q(\pi) \cap J_W$. As is apparent from the definition of internal perfect matching, the prime toric divisor $D_{\pi}$ is identified with the image of the zero section $\iota \colon Z \hookrightarrow X$. Furthermore, there is a canonical surjective map $\iota^{\natural}\colon \Gamma(X,\Ts^{\vee} \otimes \Ts) \fl \Gamma(Z, \Es^{\vee} \otimes \Es)$ whose kernel is generated by sections of $\Ts^{\vee} \otimes \Ts$ vanishing at $D_{\pi}$. Observe also that the map sending a path $p=a_1 \ldots a_{m}$ in $Q(\pi)$ to the product of the corresponding sections $s_1\ldots s_m \in \uHom(\Os(\gamma_{t(p)}) , \Os(\gamma_{h(p)}))$ determines an algebra homomorphism $\eta \colon \C Q(\pi) \fl \uEnd(\Ts)$. From this, one readily deduces that $\eta$ sends paths $p,p'$ in $Q_{\pi}$ satisfying $t(p)=t(p')$ and $h(p)=h(p')$ to the same element in $ \uHom(\Os(\gamma_{t(p)}) , \Os(\gamma_{h(p)}))$ vanishing at $D_{\pi}$ if and only if $p-p' \in J_{\pi}$. Therefore
$$
B=\uEnd(\Es) \cong \C Q(\pi)/J_{\pi} \, ,
$$
which proves the initial assertion.

We now turn our attention to obtaining some important consequences of the preceding results. Denote by $\umod B$ the category of finitely generated right modules over $B$. For each vertex $i \in Q_0$ we have a simple object $T_i$ in $\umod B$; this is the representation which assigns the field $\C$ to the vertex $i$ and $0$ to any other vertex and where each arrow gives the zero map. We obtain in this way a complete set of representatives for the isomorphism classes of simple $B$-modules. Let $T$ be the sum $\bigoplus_{i \in Q_0} T_i$ of all simple modules. The chain complex of endomorphisms $\Rd \uHom_B^{\bullet}(T,T)$ has a natural structure of a differential graded algebra. Multiplication is given by composition of endomorphisms. As explained in appendix~\ref{app:Ainfinity}, there is an $A_{\infty}$-structure on the full Ext-algebra 
$$
V^{\bullet} = \uExt_B^{\bullet}(T,T)
$$
with $m_1=0$ and $m_2$ is induced by the multiplication of $\Rd \uHom_B^{\bullet}(T,T)$. This $A_{\infty}$-structure on $V^{\bullet}$ is unique up to $A_{\infty}$-isomorphisms, and we will determine it explicitly below. 

As follows from the remarks at the beginning of this section, the category $\Dd^b(\uCoh(Z))$ is equivalent to $\Dd^b(\umod B)$. This equivalence is given by the functor $\Rd \uHom^{\bullet}(\Es,-)$. By a general result of \cite[Thm.~3.1]{MR2258042}, the triangulated subcategory of $\Dd^b(\umod B)$ generated by the $T_i$ is equivalent to the derived category $\Dd^b(V^{\bullet})$ (obtained as the degree zero cohomology of the $A_{\infty}$-category of ``twisted complexes'' over $V^{\bullet}$, see e.\,g.~\cite{l0310337, l0610120}). Since $T_i$, $i \in Q_0$, generate the derived category $\Dd^b(\umod B)$, we obtain an equivalence between $\Dd^b(\umod B)$ and $\Dd^b(V^{\bullet})$.

To summarize, we have proved the following.

\begin{prop}
The derived category of coherent sheaves $\Dd^b(\uCoh(Z))$ on a weak Fano surface $Z$ is equivalent to the derived category $\Dd^b(V^{\bullet})$.
\end{prop}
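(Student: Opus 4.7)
The proof plan is to compose two equivalences that are, in essence, already assembled in the discussion leading up to the proposition. First, since $\Es$ is a tilting bundle on $Z$ (as established in appendix~\ref{app:tilting}), the standard tilting theorem recalled at the start of the section produces an equivalence of triangulated categories
\[
\Rd\uHom^{\bullet}(\Es,-)\colon \Dd^b(\uCoh(Z)) \xrightarrow{\,\sim\,} \Dd^b(\umod B),
\]
where $B = \uEnd(\Es)$. This takes care of the passage from the geometric side to the algebraic side.

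Next, the plan is to identify $\Dd^b(\umod B)$ with $\Dd^b(V^{\bullet})$. For this I would invoke \cite[Thm.~3.1]{MR2258042}, applied to the compact generator $T = \bigoplus_{i\in Q_0} T_i$ of the target triangulated category. That theorem asserts that the thick triangulated subcategory $\langle T \rangle \subset \Dd^b(\umod B)$ generated by $T$ is equivalent to the derived category $\Dd^b(V^{\bullet})$ of twisted complexes over the minimal $A_{\infty}$-algebra $V^{\bullet}=\uExt^{\bullet}_B(T,T)$, which has already been fixed in the preceding paragraph. So the content of the proposition is reduced to showing that $\langle T \rangle = \Dd^b(\umod B)$.

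The main step I would then concentrate on is precisely this generation statement. The key point is that $B \cong \C Q(\pi)/J_{\pi}$ is a finite-dimensional $\C$-algebra whose isomorphism classes of simple modules are exactly the $T_i$, one for each vertex of $Q(\pi)$. Consequently any finitely generated right $B$-module admits a finite composition series whose factors are among the $T_i$, and is therefore obtained by finitely many extensions from the simples. Passing to bounded complexes, every object of $\Dd^b(\umod B)$ lies in the triangulated hull of the $T_i$, giving $\langle T \rangle = \Dd^b(\umod B)$ as required. Composing this equivalence with the tilting equivalence from the first step yields $\Dd^b(\uCoh(Z)) \simeq \Dd^b(V^{\bullet})$.

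The chief technical obstacle I anticipate is verifying the hypotheses needed for \cite[Thm.~3.1]{MR2258042}: one needs $T$ to be a classical generator with appropriate finiteness properties so that the minimal $A_{\infty}$-model of $\Rd\uHom_B^{\bullet}(T,T)$ actually recovers the full triangulated category rather than merely a proper subcategory. These properties follow from the fact that $B$ has finite global dimension (a consequence of $\Es$ being a tilting bundle on the smooth variety $Z$, so that $\Dd^b(\umod B) \simeq \Dd^b(\uCoh(Z))$ inherits the boundedness of Ext from $Z$), together with the finite-dimensionality of $B$ which ensures that each $T_i$ is compact. Once these ingredients are in place, nothing further beyond the bookkeeping above is required.
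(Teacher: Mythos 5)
Your proposal is correct and follows essentially the same route as the paper: the tilting equivalence $\Rd\uHom^{\bullet}(\Es,-)\colon \Dd^b(\uCoh(Z)) \xrightarrow{\sim} \Dd^b(\umod B)$, followed by \cite[Thm.~3.1]{MR2258042} identifying the triangulated subcategory generated by the $T_i$ with $\Dd^b(V^{\bullet})$, and the observation that the simples generate $\Dd^b(\umod B)$. The only difference is that you spell out the generation step via composition series over the finite-dimensional algebra $B \cong \C Q(\pi)/J_{\pi}$, which the paper simply asserts.
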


Now we consider the algebra $A$ corresponding to $X$. Keeping the above notation, the category of finitely generated right modules over $A$ will be denoted by $\umod A$. Again, to every vertex $i \in Q_0$ corresponds canonically a simple object~$S_i$ in $\umod A$. It is easily verified that the~$S_i$ coincide with the objects $\iota_*T_i$. 
In contrast to the above, these will not be the only simple modules. The importance of the $S_i$ lies in the fact that they are the fractional branes which give rise to the effective quiver gauge theory. We can again form the sum $S= \bigoplus_{i \in Q_0} S_i$ and the $A_{\infty}$-algebra 
$$
\Lambda^{\bullet}=\uExt^{\bullet}_A(S,S) \, ,
$$
describing open strings stretched between the fractional branes. There is the following connection between this new $A_{\infty}$-algebra and the previous one (again, see appendix~\ref{app:Ainfinity} for $\Aa$-terminology).

\begin{thm}[\cite{af0506041, s0702539, b0801.3499, cfikv0110028}]
$\Lambda^{\bullet}$ is $A_{\infty}$-quasi-isomorphic to the trivial extension $V^{\bullet} \oplus V^{\bullet} [-3]^{\vee}$.
\end{thm}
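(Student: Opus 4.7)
The strategy is to translate both $\Aa$-algebras to the geometric side via the tilting equivalences, and then exploit the codimension-one embedding $\iota\colon Z\hookrightarrow X=\utot(\omega_Z)$ with normal bundle $\omega_Z$. Under $\Rd\uHom_Z(\Es,-)$ and $\Rd\uHom_X(\Ts,-)$, the simple modules $T_i$ and $S_i=\iota_*T_i$ correspond to objects $G_i\in\Dd^b(\uCoh(Z))$ and $\iota_*G_i\in\Dd^b(\uCoh(X))$, so writing $G=\bigoplus_{i\in Q_0}G_i$ one has $V^\bullet\cong\uExt^\bullet_Z(G,G)$ and $\Lambda^\bullet\cong\uExt^\bullet_X(\iota_*G,\iota_*G)$. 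It therefore suffices to compare the natural $\Aa$-structures on these two Ext-algebras.

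First I would establish the underlying graded vector-space isomorphism via the local-to-global spectral sequence. Since $Z$ is a smooth Cartier divisor in $X$ with $\Ncal_{Z/X}\cong\omega_Z$, only the Ext-sheaves $\mathcal{E}xt^{0}$ and $\mathcal{E}xt^{1}$ contribute, so
\[
\Lambda^n \;\cong\; \uExt^n_Z(G,G)\,\oplus\,\uExt^{n-1}_Z(G,G\otimes\omega_Z).
\]
Serre duality on the projective surface $Z$ identifies the second summand with $\uExt^{3-n}_Z(G,G)^\vee$, yielding the required decomposition $\Lambda^\bullet\cong V^\bullet\oplus V^\bullet[-3]^\vee$ at the level of graded vector spaces.

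To lift this to an $\Aa$-quasi-isomorphism, I would replace $\iota_*G$ by the two-term Koszul resolution $\bigl[\pi^*(G\otimes\omega_Z^{-1})\xrightarrow{\,s\,}\pi^*G\bigr]$ attached to the tautological section $s$ of $\pi^*\omega_Z$. Applying $\Rd\uHom_X$ together with the projection formula and $\iota^*\pi^*=\id$ realises $\Rd\uHom_X(\iota_*G,\iota_*G)$ as an explicit dga built out of $\Rd\uHom_Z(G,G)$ and the bimodule $\Rd\uHom_Z(G,G\otimes\omega_Z)[-1]$ assembled as a semidirect product. Invoking the homological perturbation machinery of Keller's higher multiplication theorem---the same tool used in the previous paragraph to equip $V^\bullet$ with its $\Aa$-structure---transfers this dga to its minimal model and furnishes the $\Aa$-structure on $\Lambda^\bullet$ together with an $\Aa$-quasi-isomorphism to it.

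I expect the delicate step to be verifying that the higher products $m_k$ with two or more inputs from the $V^\bullet[-3]^\vee$ summand vanish, since this is precisely what pins down the trivial extension among all possible Calabi-Yau completions of $V^\bullet$. Here the codimension-one hypothesis is decisive: $\Ncal_{Z/X}$ being a line bundle forces $\bigwedge^{2}\Ncal_{Z/X}=0$, so any tree in the perturbation expansion that tries to route two dual inputs through successive bimodule actions is annihilated by this exterior square. As a structural sanity check, both $\Lambda^\bullet$ (via Serre duality on the Calabi-Yau threefold $X$) and $V^\bullet\oplus V^\bullet[-3]^\vee$ (by construction) carry canonical cyclic $\Aa$-structures of dimension three with the same underlying non-degenerate pairing between $V^\bullet$ and its shifted dual, so the identification is rigid up to an inner $\Aa$-automorphism; what remains is the careful bookkeeping of signs and of the precise form of the bimodule action.
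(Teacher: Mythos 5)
First, a point of orientation: the paper does not actually prove this theorem --- it is imported wholesale from the cited literature (Aspinwall--Fidkowski, Segal, Seidel, Cachazo et al.), so there is no in-paper argument to measure your proposal against. What you have written is a sketch of the standard geometric proof found in those references, and its overall architecture is sound: pass through the tilting equivalences so that $V^{\bullet}$ and $\Lambda^{\bullet}$ become $\uExt^{\bullet}_Z(G,G)$ and $\uExt^{\bullet}_X(\iota_*G,\iota_*G)$; use $\Ld\iota^*\iota_*G\cong G\oplus G\otimes\omega_Z^{-1}[1]$ (the splitting being automatic here because $\pi\circ\iota=\id$) together with Serre duality on the surface $Z$ to obtain the graded vector space decomposition; and model $\Rd\uHom_X(\iota_*G,\iota_*G)$ on the two-term Koszul resolution to transfer the $\Aa$-structure.

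The step you yourself flag as delicate is, however, the one place where your argument as written does not close. The vanishing of $\bigwedge^2\Ncal_{Z/X}$ kills the component $\mathcal{E}xt^1\otimes\mathcal{E}xt^1\to\mathcal{E}xt^2$ of the \emph{associative} product on the cohomology of the Koszul model, but it does not by itself control the transferred higher products $m_k$: homological perturbation routes inputs through homotopies that mix the two rows of the model, so a tree with two inputs from $V^{\bullet}[-3]^{\vee}$ need not literally factor through $\bigwedge^2\Ncal_{Z/X}$. The standard way to make your heuristic rigorous is to invoke the auxiliary grading by fibre weight of the $\Ctimes$-action scaling the fibres of $X=\utot(\omega_Z)$: the summand $V^{\bullet}$ sits in weight $0$, the summand $V^{\bullet}[-3]^{\vee}\cong\uExt^{\bullet-1}_Z(G,G\otimes\omega_Z)$ in weight $1$, all structure maps (including homotopies, which may be chosen equivariantly) preserve the weight, and since $\Lambda^{\bullet}$ contains nothing of weight $\geq 2$, every product with two or more dual inputs must vanish. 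Separately, you should make explicit that the $\uExt^{\bullet}_Z(G,G)$-bimodule structure on $\uExt^{\bullet-1}_Z(G,G\otimes\omega_Z)$ is identified with the \emph{dual} bimodule structure on $V^{\bullet}[-3]^{\vee}$ used in appendix~\ref{app:Ainfinity}; this amounts to the compatibility of the Serre pairing with composition on both sides, and it is part of the content of the theorem rather than mere sign bookkeeping.
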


We should take a moment here to make one important point. Let $\Dd_0^b(\umod A)$ be the full triangulated subcategory of $\Dd^b(\umod A)$ generated by the simple modules $S_i$.  Arguing as before, we see that $\Dd_0^b(\umod A)$ is equivalent to the derived category $\Dd^b(\Lambda^{\bullet})$. On the other hand, define $\Dd_Z^b(\uCoh(X)) \subset \Dd^b(\uCoh(X))$ to be the full subcategory consisting of objects all of whose cohomology sheaves are supported on the zero section $Z \subset X$. A result of \cite[Lem.~4.4]{MR2142382} provides an equivalence between $\Dd^b_Z(\uCoh(X))$ and $\Dd^b_0(\umod A)$. The net conclusion is that there is an equivalence 
$$
\Dd^b_Z(\uCoh(X)) \cong \Dd^b(\Lambda^{\bullet}) \, .
$$
This means that  the structure of the open topological string theory described by the D-brane category $\Dd^b_Z(\uCoh(X))$ is controlled by the relatively simple $A_{\infty}$-algebra $\Lambda^{\bullet}$ of open strings between fractional branes.

Now we arrive at a key junction. Let $E_0=\bigoplus_{i \in Q_0} \C \cdot e_i$ be the semisimple commutative subalgebra of $\C Q$ spanned by the trivial paths. We define $E_1=\bigoplus_{a \in Q_1}\C \cdot a$ with the natural structure of an $E_0$-bimodule. For each arrow $a \in Q_1$, there is a relation $r_a=\partial_a W$ which is the cyclic derivative of the superpotential~$W$. Let $R=\bigoplus_{a \in Q_1} \C \cdot r_a$ be the $E_0$-bimodule generated by these. Then the path algebra of $Q$ is $\C Q=T_{E_0}E_1=\bigoplus_{k \geq 0}T^k_{E_0}E_1$, where $T^k_{E_0}E_1=E_1 \otimes_{E_0} \ldots \otimes_{E_0}E_1$ is the $k$-fold tensor product. We learn, at the same time, that the algebra $A$ is the quotient of $T_{E_0}E_1$ by the $2$-sided ideal generated by the $E_0$-bimodule $R$. We also have the following canonical isomorphisms (cf.~\cite{MR2533303})
\begin{align*}
E_1^{\vee} &\cong \uExt_A^1(S,S) = \uExt_B^1(T,T) \oplus \uExt_B^2(T,T)^{\vee} \, ,\\
R^{\vee} &\cong \uExt_A^2(S,S)=\uExt_B^2(T,T) \oplus \uExt_B^1(T,T)^{\vee} \, .
\end{align*}
In the same vein, one defines $E_0$-bimodules $E_1(\pi)= \bigoplus_{a \in Q(\pi)_1}\C \cdot a$ and $R(\pi)=\bigoplus_{a \in \pi} \C \cdot r_a$, where we recall that $a\in\pi$ means that the arrow~$a$ crosses the perfect matching~$\pi$, i.\,e.~$a\in Q_{1}\backslash Q(\pi)_{1}$.
Once again, we can represent $B$ as the quotient of the tensor algebra $T_{E_0}E_1(\pi)$ by the $2$-sided ideal generated by the space of relations $R(\pi)$. Moreover, there are canonical isomorphisms
$$
E_1(\pi)^{\vee} \cong \uExt_B^1(T,T) \quad \text{and} \quad R(\pi)^{\vee} \cong \uExt_B^2(T,T) \, .
$$

The following important result is a version of the minimal model theorem discussed in appendix~\ref{app:Ainfinity} that is specially adapted to the case of quivers. It will enable us to take full advantage of the results so far obtained. 

\begin{thm}[Keller's higher multiplication theorem~\cite{kellerainfinrep}]\label{higherKeller}
Let $i \colon R(\pi) \hookrightarrow T_{E_0}E_1(\pi)$ be the inclusion map and let $\lambda$ be the composite $R \xrightarrow{i} T_{E_0}E_1(\pi) \fl \bigoplus_{k \geq 0}E_1(\pi)^{\otimes k}$. Then the only non-vanishing higher multiplications $m_k$ of $\Lambda^\bullet=\uExt_B^{\bullet}(S,S)$ are equal to the maps
$$
\lambda_k^{\vee}\colon \uExt_B^1(T,T)^{\otimes k} \longrightarrow \uExt_B^2(T,T) \, .
$$
Thus essentially one knows these multiplications as soon as one knows all the relations in the quiver. 
\end{thm}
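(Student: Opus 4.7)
My plan is to apply the minimal model theorem of appendix~\ref{app:Ainfinity} to the DG algebra $\Rd\uHom_B^\bullet(T,T)$, whose cohomology is $V^\bullet = \uExt_B^\bullet(T,T)$, and to identify the transferred $A_\infty$-products explicitly using a length-two resolution of $T$ built directly from the quiver presentation of~$B$. The starting point is the natural projective resolution of the right $B$-module~$T$ associated to $B=T_{E_0}E_1(\pi)/\langle R(\pi)\rangle$,
$$
0 \longrightarrow R(\pi)\otimes_{E_0} B \xrightarrow{\;d_2\;} E_1(\pi)\otimes_{E_0} B \xrightarrow{\;d_1\;} E_0\otimes_{E_0} B \longrightarrow T \longrightarrow 0 ,
$$
where $d_1$ is arrow-multiplication and $d_2$ is defined on a relation $r\in R(\pi)$ by factoring out the last arrow of every monomial in the expansion $\lambda(r)$. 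Since $B$ is derived equivalent to $\uCoh(Z)$ for the smooth toric surface~$Z$, it has global dimension two; applying $\uHom_B(-,T)$ therefore yields $\uExt_B^0(T,T)\cong E_0$, $\uExt_B^1(T,T)\cong E_1(\pi)^\vee$, $\uExt_B^2(T,T)\cong R(\pi)^\vee$ and $0$ in higher degree, so the graded space of $V^\bullet$ is as expected.

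Next I would fix a contracting homotopy $h$ compatible with the $E_0$-bimodule grading that realises $V^\bullet$ as a deformation retract of $\Rd\uHom_B^\bullet(T,T)$, and apply the Kadeishvili tree formula for the transferred $A_\infty$-structure. For inputs $x_1,\ldots,x_k\in \uExt_B^1(T,T) = E_1(\pi)^\vee$ the only trees that survive have a single ``spine'': the DG multiplications concatenate $k$ dual arrows along $d_1$ and $h$ lifts the resulting length-$k$ word against $d_2$ at the top. By construction of $d_2$, this lift reads off precisely the $k$-th homogeneous component $\lambda_k(r)$ of the expansion $\lambda(r)\in\bigoplus_{k\geq 0}E_1(\pi)^{\otimes k}$ of a relation $r\in R(\pi)$; dually, $m_k$ is identified with $\lambda_k^\vee$ as a map $\uExt_B^1(T,T)^{\otimes k}\to \uExt_B^2(T,T)$. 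Any other configuration of inputs either contains a degree-zero factor in $E_0 = \uExt_B^0(T,T)$, where the unit axiom of the transferred structure forces $m_k$ to vanish for $k\geq 3$ and to reproduce ordinary composition for $k=2$, or lands in $\uExt_B^{\geq 3}=0$ and is automatically trivial.

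The main obstacle is the sign and coefficient bookkeeping for the tree sum: although the length-two resolution kills most trees outright, one must still verify that each surviving tree contributes exactly one arrow-monomial of $\lambda(r)$ with the correct Koszul sign. This is handled by choosing $h$ to be the $E_0$-bilinear splitting that truncates the trailing arrow of any non-trivial path, so that iterated applications of $d_1$ followed by a single $d_2$-lift match $\lambda_k$ on the nose; the signs coming from the tree formula then agree with those produced by the dualisation $\lambda_k\mapsto \lambda_k^\vee$, and the identification $m_k = \lambda_k^\vee$, together with the vanishing of all remaining higher products, follows by a direct unwinding of definitions.
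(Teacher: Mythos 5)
First, a point of comparison: the paper does not actually prove theorem~\ref{higherKeller} --- it is imported as a black box from Keller~\cite{kellerainfinrep} --- so there is no in-paper argument to measure yours against. Your sketch follows what is essentially the standard proof of that result (carried out in detail in, e.g.,~\cite{MR2533303}): take the beginning of the minimal projective resolution of $T$ determined by the presentation $B=T_{E_0}E_1(\pi)/\langle R(\pi)\rangle$, transfer the $A_\infty$-structure from $\uHom_B^\bullet(P,P)$ by the tree formula, and observe that the relations can only enter through the lift against $d_2$, which reproduces the components $\lambda_k$. The degree count for the ``only non-vanishing'' clause (an input in $\uExt^2$ pushes the output into $\uExt^{\geq 3}=0$; inputs in $\uExt^0=E_0$ are killed by strict $E_0$-unitality of the minimal model) is also the right mechanism. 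So the architecture is sound.

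Two steps need genuine repair. First, the exactness of your four-term sequence is not free: exactness at $E_1(\pi)\otimes_{E_0}B$ requires $R(\pi)$ to be a \emph{minimal} set of relations (a basis of $J_\pi/(J_\pi\mathfrak{m}+\mathfrak{m}J_\pi)$, with $\mathfrak{m}$ the arrow ideal), and the vanishing of $\ker d_2$ is equivalent to $\operatorname{pd}_BT\leq 2$, i.e.\ to $\uExt_B^3(T,T)=0$. Your justification of the latter --- ``$B$ is derived equivalent to $\uCoh(Z)$ for a smooth surface, hence has global dimension two'' --- is not a valid inference: global dimension is not a derived invariant, and there exist tilting bundles on smooth rational surfaces whose endomorphism algebras have global dimension~$3$. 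In the present setting the vanishing has to come from the consistency of the brane tiling (equivalently, from the Calabi--Yau structure of $A$ and the identification $\uExt_A^\bullet(S,S)\cong V^\bullet\oplus V^\bullet[-3]^\vee$), and this should be invoked explicitly rather than deduced from derived equivalence. Second, the assertion that ``the only trees that survive have a single spine'' is precisely the content of the computation and cannot be waved through: one must choose the homotopy $h$ to satisfy the side conditions $h\iota=0$, $ph=0$, $h^2=0$, and then check --- using that the resolution has length two --- that every other planar tree contains either a composite $h\circ h$ or a composition landing in homological degree $\geq 3$. This is a finite but real verification, and it is also the place where the signs you defer actually get fixed. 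With these two points supplied, the argument is complete.
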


With all this in place, we are now ready to compare the superpotential extracted from the $A_{\infty}$-structure inherent in the derived category of coherent sheaves on $X$ and the superpotential extracted from the brane tiling. Let us denote by $\{a_1,\ldots,a_n \}$ the set of arrows in $Q(\pi)$ and consider it as a basis in $E_1(\pi)$. Denote by $\{\alpha_1,\ldots,\alpha_n\}$ the basis for $\uExt^1_B(T,T)$ which is dual to $\{a_1,\ldots,a_m\}$. Let $\{ r_1,\ldots,r_m\}$ be the full set of relations in $Q(\pi)$, where each $r_j$ can be written
$$
r_j =\sum_{k \geq 2}\sum_{i_1,\ldots,i_k}(a_{i_1}\ldots a_{i_k},r_j) a_{i_1} \otimes \ldots \otimes a_{i_k} \, ,
$$
for some $(a_{i_1}\ldots a_{i_k},r_j) \in \C$. We write $\{\rho_1,\ldots,\rho_m \}$ for the generators of $\uExt^2_B(T,T)$ dual to $\{ r_1,\ldots,r_m\}$. By theorem~\ref{higherKeller}, the higher multiplications $m_k$ are ``dual to the relations'':
$$
m_k(\alpha_{i_1},\ldots,\alpha_{i_k})=\sum_j (a_{i_1}\ldots a_{i_k},r_j) \rho_j \, .
$$
Now consider the basis $\{ \mathbi{a}_1,\ldots,\mathbi{a}_n, \gbold{\rho}_1,\ldots,\gbold{\rho}_m\}$ of $E_1$ given by $\mathbi{a}_i=(a_i,0)$ and $\gbold{\rho}_j=(0,\rho_j)$. Then $\uExt_A^1(S,S)$ has a basis $\{\gbold{\alpha}_1,\ldots,\gbold{\alpha}_n,\mathbi{r}_1,\ldots,\mathbi{r}_m \}$, where $\gbold{\alpha}_i=(\alpha_i,0)$ and $\mathbi{r}_j=(0,r_j)$. In the same manner, we obtain a basis $\{\gbold{\alpha}^*_1,\ldots,\gbold{\alpha}^*_n,\mathbi{r}^*_1,\ldots,\mathbi{r}^*_m \}$ of $\uExt_A^2(S,S)$ with $\gbold{\alpha}^*_i=(0,a_i)$ and $\mathbi{r}^*_j=(\rho_j,0)$. We also note that the $A_{\infty}$-algebra $\Lambda^{\bullet}=\uExt^{\bullet}_A(S,S)$ carries a canonical Calabi-Yau pairing given by
$$
\langle \gbold{\alpha}^*_i,\gbold{\alpha}_j \rangle = \langle \mathbi{r}^*_i,\mathbi{r}_j \rangle = \delta_{ij} \, .
$$
This pairing coincides with the topological metric. Combining our remarks with the results in appendix~\ref{app:Ainfinity}, we have for the higher multiplications $\mathbi{m}_k$ of $\Lambda^{\bullet}$:
$$
\mathbi{m}_k(\gbold{\alpha}_{i_1},\ldots,\gbold{\alpha}_{i_k})=\sum_j (a_{i_1}\ldots a_{i_k},r_j) \mathbi{r}^*_j \, .
$$
Together with the cyclicity property this determines all the $A_{\infty}$-products in $\Lambda^{\bullet}$. 

We have now accumulated all the information necessary to compute the superpotential of the low energy effective field theory of the open
topological string theory described by $\Dd^b_Z(\uCoh(X))$; we find
\begin{align*}
\Wcal_{\infty}&=\sum_{k \geq 2} \sum_{i_0,\ldots,i_k}\langle \mathbi{m}_k(\gbold{\alpha}_{i_1},\ldots,\gbold{\alpha}_{i_k}), \mathbi{r}_{i_0} \rangle \gbold{\rho}_{i_0} \otimes \mathbi{a}_{i_1}\otimes\ldots \otimes\mathbi{a}_{i_k} \\
&= \sum_{k \geq 2} \sum_{i_1,\ldots,i_k} \Bigg(\sum_{i,j} (a_{i_1}\ldots a_{i_k},r_i)\langle \mathbi{r}^*_i,\mathbi{r}_j \rangle \gbold{\rho}_{j}\Bigg) \otimes \mathbi{a}_{i_1}\otimes\ldots \otimes\mathbi{a}_{i_k}\\
&= \sum_j \Bigg(   \sum_{k \geq 2} \sum_{i_1,\ldots,i_k} (a_{i_1}\ldots a_{i_k},r_j)\mathbi{a}_{i_1}\otimes\ldots \otimes\mathbi{a}_{i_k}  \Bigg) \otimes \gbold{\rho}_{j} \\
&= \sum_j \mathbi{r}_j \otimes \gbold{\rho}_{j} \, .
\end{align*}
This result was argued for heuristically in \cite[Sect.~4]{af0506041}.

To conclude, we will show that the brane tiling superpotential $W$ is the same as~$\Wcal_{\infty}$. This turns out to be very straightforward. To begin with, each arrow $a \in Q_1$ occurs twice and only twice in the superpotential~$W$, and from~\eqref{Wbt} we see that two terms involving any given arrow appear with opposite signs. Furthermore, any pair of terms may have only one single arrow in common. Thus, using the notation introduced in the text preceding theorem~\ref{higherKeller}, we deduce that~$W$ can be expressed in the form
$$
W=\sum_{a \in \pi} r_a a \, ,
$$ 
where we recall that the construction fixes an arbitrary internal perfect matching~$\pi$. Since the arrows crossing~$\pi$ correspond to relations in the quiver $Q(\pi)$, we obtain the desired assertion. 

Altogether, we have established the following result.
\medskip
\begin{center}
\parbox[b]{0.95\textwidth}{\sl The $A_{\infty}$-structure of the D-brane category $\Dd^b_Z(\uCoh(X))$ is the same as the one encoded in the brane tiling superpotential $W$. In particular, the effective quiver gauge theory can be obtained solely from the fundamental $A_{\infty}$-structure of $\Dd^b_Z(\uCoh(X))$.}
\end{center}

\section[An example:~$\mathrm{dP}_{2}$]{An example:~$\boldsymbol{\mathrm{dP}_{2}}$}\label{sec:example}

Let us illustrate how the general formalism of the previous section applies in the concrete case where~$Z$ is $\PP^2$ blown up at two points, also known as the second del Pezzo surface $\mathrm{dP}_2$. We will explicitly construct a full strong exceptional collection on $\mathrm{dP}_2$ and determine the associated acyclic quiver and its relations. This allows us to apply equation~\eqref{Wrel} to compute the effective superpotential~$\mathcal W_{\infty}$, and of course we find that it can be identified with the brane tiling superpotential~$W$ from~\eqref{Wbt}. 

The surface $\mathrm{dP}_2$ is the toric variety defined by the polytope with vertices $u_1=(1,0)$, $u_2=(0,1)$, $u_3=(-1,1)$, $u_4=(-1,0)$ and $u_5=(0,-1)$. The toric diagram for the total space of the canonical bundle of $\mathrm{dP}_2$ is given by the fan with one-dimensional cones generated by $v_0=(0,0,1)$, $v_1=(1,0,1)$, $v_2=(0,1,1)$, $v_3=(-1,1,1)$, $v_4=(-1,0,1)$ and $v_5=(0,-1,1)$, and three-dimensional cones $\sigma_i=\R_{\geq 0} v_0+\R_{\geq 0}v_i+\R_{\geq 0}v_{i+1}$ 
for $1 \leq i \leq 4$:
$$
\begin{tikzpicture}[scale=1.0,baseline,inner sep=1mm]

\node (p0) at (0,0) [circle,draw=black,fill=white,label=above:$v_{0}$] {};
\node (p1) at (1,0) [circle,draw=black,fill=black,label=right:$v_{1}$] {};
\node (p2) at (0,1) [circle,draw=black,fill=black,label=above:$v_{2}$] {};
\node (p3) at (-1,1) [circle,draw=black,fill=black,label=above:$v_{3}$] {};
\node (p4) at (-1,0) [circle,draw=black,fill=black,label=left:$v_{4}$] {};
\node (p5) at (0,-1) [circle,draw=black,fill=black,label=below:$v_{5}$] {};

\draw[-, very thick] (p1) -- (p2) -- (p3) -- (p4) -- (p5) -- (p1); 
\end{tikzpicture}
$$

The associated brane tiling is
$$
\begin{tikzpicture}[scale=.55,baseline=2cm, inner sep=1mm]
\clip (-0.5,-0.5) rectangle (8.5,8.5);
\shadedraw[top color=blue!5, bottom color=blue!35, draw= white] (-0.5,-0.5) rectangle (8.5,8.5);

\draw[dashed] (0,0) rectangle (8,8);

\node (b1) at (0.3,5) [circle,draw=black,fill=black] {};
\node (b1s) at (8.3,5) [circle,draw=black,fill=black] {};
\node (b2) at (1.5,1.5) [circle,draw=black,fill=black] {};
\node (b2s) at (9.5,1.5) [circle,draw=black,fill=black] {};
\node (b2ss) at (9.5,9.5) [circle,draw=black,fill=black] {};
\node (b2sss) at (1.5,9.5) [circle,draw=black,fill=black] {};
\node (w1) at (2,6) [circle,draw=black,fill=white] {};
\node (w1ss) at (2,-2) [circle,draw=black,fill=white] {};
\node (w1s) at (10,6) [circle,draw=black,fill=white] {};
\node (b3) at (5,0.3) [circle,draw=black,fill=black] {};
\node (b3s) at (5,8.3) [circle,draw=black,fill=black] {};
\node (w2) at (6,2.5) [circle,draw=black,fill=white] {};
\node (w2s) at (6,10.5) [circle,draw=black,fill=white] {};
\node (w2ss) at (-2,2.5) [circle,draw=black,fill=white] {};
\node (w3) at (7,7) [circle,draw=black,fill=white] {};
\node (w3s) at (-1,7) [circle,draw=black,fill=white] {};
\node (w3ss) at (-1,-1) [circle,draw=black,fill=white] {};
\node (w3sss) at (7,-1) [circle,draw=black,fill=white] {};

\draw[-, very thick] (b2) -- (w1); 
\draw[-, very thick] (b2) -- (w2); 
\draw[-, very thick] (w2) -- (b3); 
\draw[-, very thick] (w2) -- (b1s); 
\draw[-, very thick] (w3) -- (b1s); 
\draw[-, very thick] (w3) -- (b3s); 
\draw[-, very thick] (w1) -- (b3s); 
\draw[-, very thick] (w1) -- (b1); 
\draw[-, very thick] (w2) -- (b2s); 
\draw[-, very thick] (w1s) -- (b1s); 
\draw[-, very thick] (w3) -- (b2ss); 
\draw[-, very thick] (w2s) -- (b3s); 
\draw[-, very thick] (w1) -- (b2sss); 
\draw[-, very thick] (b1) -- (w3s); 
\draw[-, very thick] (b1) -- (w2ss); 
\draw[-, very thick] (b2) -- (w2ss); 
\draw[-, very thick] (b2) -- (w1ss); 
\draw[-, very thick] (b2) -- (w3ss); 
\draw[-, very thick] (b3) -- (w1ss); 
\draw[-, very thick] (b3) -- (w3sss); 

\end{tikzpicture}
$$
where the dash-bounded region is a fundamental domain of the torus. From this one may read off the dual periodic quiver~$Q$ which is given by%
$$
\begin{tikzpicture}[scale=.55, inner sep=1mm, >=stealth]%
\clip (-0.5,-0.5) rectangle (8.5,8.5);
\shadedraw[top color=blue!5, bottom color=blue!35, draw= white] (-0.5,-0.5) rectangle (8.5,8.5);

\draw[dashed] (0,0) rectangle (8,8);

\node (b1) at (0.3,5) [circle,draw=gray,fill= gray] {};
\node (b1s) at (8.3,5) [circle,draw= gray,fill= gray] {};
\node (b2) at (1.5,1.5) [circle,draw= gray,fill= gray] {};
\node (b2s) at (9.5,1.5) [circle,draw= gray,fill= gray] {};
\node (b2ss) at (9.5,9.5) [circle,draw= gray,fill= gray] {};
\node (b2sss) at (1.5,9.5) [circle,draw= gray,fill= gray] {};
\node (w1) at (2,6) [circle,draw= gray,fill=white] {};
\node (w1ss) at (2,-2) [circle,draw= gray,fill=white] {};
\node (w1s) at (10,6) [circle,draw= gray,fill=white] {};
\node (b3) at (5,0.3) [circle,draw= gray,fill= gray] {};
\node (b3s) at (5,8.3) [circle,draw= gray,fill= gray] {};
\node (w2) at (6,2.5) [circle,draw= gray,fill=white] {};
\node (w2s) at (6,10.5) [circle,draw= gray,fill=white] {};
\node (w2ss) at (-2,2.5) [circle,draw= gray,fill=white] {};
\node (w3) at (7,7) [circle,draw= gray,fill=white] {};
\node (w3s) at (-1,7) [circle,draw= gray,fill=white] {};
\node (w3ss) at (-1,-1) [circle,draw= gray,fill=white] {};
\node (w3sss) at (7,-1) [circle,draw= gray,fill=white] {};

\node (0) at (0.7,7) [circle,draw=black,fill=white] {3};
\node (0u) at (0.7,-1) [circle,draw=black,fill=white] {3};
\node (0r) at (8.7,7) [circle,draw=black,fill=white] {3};
\node (0ur) at (8.7,-1) [circle,draw=black,fill=white] {3};

\node (1) at (0.7,3.3) [circle,draw=black,fill=white] {0};
\node (1r) at (8.7,3.3) [circle,draw=black,fill=white] {0};

\node (2) at (7,0.8) [circle,draw=black,fill=white] {4};
\node (2o) at (7,8.8) [circle,draw=black,fill=white] {4};
\node (2l) at (-1,0.8) [circle,draw=black,fill=white] {4};
\node (2ol) at (-1,8.8) [circle,draw=black,fill=white] {4};

\node (3) at (3.3,0.8) [circle,draw=black,fill=white] {2};
\node (3o) at (3.3,8.8) [circle,draw=black,fill=white] {2};

\node (4) at (4.7,4.8) [circle,draw=black,fill=white] {1};
\node (4u) at (4.7,-3.2) [circle,draw=black,fill=white] {1};
\node (4l) at (-3.3,4.8) [circle,draw=black,fill=white] {1};

\draw[->, very thick] (0) -- (1); 
\draw[->, very thick] (1) -- (4); 
\draw[->, very thick] (4) -- (3); 
\draw[->, very thick] (3) -- (2); 
\draw[->, very thick] (1r) -- (4); 
\draw[->, very thick] (2o) -- (4); 
\draw[->, very thick] (2) -- (1r); 
\draw[->, very thick] (4) -- (0r); 
\draw[->, very thick] (4) -- (3o); 
\draw[->, very thick] (4u) -- (3); 
\draw[->, very thick] (3) -- (0u); 
\draw[->, very thick] (2) -- (4u); 
\draw[->, very thick] (0ur) -- (2); 
\draw[->, very thick] (1) -- (4l); 
\draw[->, very thick] (2l) -- (1); 
\draw[->, very thick] (3o) -- (0); 
\draw[->, very thick] (4l) -- (0); 
\draw[->, very thick] (0) -- (2ol); 
\draw[->, very thick] (0u) -- (2l);
\draw[->, very thick] (0r) -- (2o);

\draw[-, dotted] (b2) -- (w1); 
\draw[-, dotted] (b2) -- (w2); 
\draw[-, dotted] (w2) -- (b3); 
\draw[-, dotted] (w2) -- (b1s); 
\draw[-, dotted] (w3) -- (b1s); 
\draw[-, dotted] (w3) -- (b3s); 
\draw[-, dotted] (w1) -- (b3s); 
\draw[-, dotted] (w1) -- (b1); 
\draw[-, dotted] (w2) -- (b2s); 
\draw[-, dotted] (w1s) -- (b1s); 
\draw[-, dotted] (w3) -- (b2ss); 
\draw[-, dotted] (w2s) -- (b3s); 
\draw[-, dotted] (w1) -- (b2sss); 
\draw[-, dotted] (b1) -- (w3s); 
\draw[-, dotted] (b1) -- (w2ss); 
\draw[-, dotted] (b2) -- (w2ss); 
\draw[-, dotted] (b2) -- (w1ss); 
\draw[-, dotted] (b2) -- (w3ss); 
\draw[-, dotted] (b3) -- (w1ss); 
\draw[-, dotted] (b3) -- (w3sss); 

\end{tikzpicture}
\qquad
\begin{tikzpicture}[scale=.7, inner sep=1mm, >=stealth]
\clip (0.5,-0.5) rectangle (7.5,7.5);
\node (3) at (2,0) [circle,draw=gray,fill= white] {\footnotesize 3};
\node (2) at (6,0) [circle,draw=gray,fill= white] {\footnotesize 2};
\node (4) at (1,3.5) [circle,draw=gray,fill= white] {\footnotesize 4};
\node (1) at (7,3.5) [circle,draw=gray,fill= white] {\footnotesize 1};
\node (0) at (4,6) [circle,draw=gray,fill= white] {\footnotesize 0};

\draw[->, very thick] (0) to [out=-23,in=123] (1); 
\draw[->, very thick] (0) to [out=-53,in=153] (1);  
\draw[->, very thick] (1) to [out=-88,in=58] (2); 
\draw[->, very thick] (1) to [out=-118,in=88] (2); 
\draw[->, very thick] (1) -- (3); 
\draw[->, very thick] (2) -- (4); 
\draw[->, very thick] (2) -- (3); 
\draw[->, very thick] (3) -- (4); 
\draw[->, very thick] (3) -- (0); 
\draw[->, very thick] (4) -- (1); 
\draw[->, very thick] (4) -- (0);

\end{tikzpicture}
$$

There are ten perfect matchings $\pi_1,\ldots,\pi_{10}$ of the $\mathrm{dP}_2$ tiling. We show them in figure~\ref{fig:matchings}, where also the corresponding prime toric divisors $D_{\pi_1},\ldots,D_{\pi_{10}}$ are indicated. Here $D_{i}$ denotes the toric divisor defined by the ray generator $v_i$, $0 \leq i \leq 5$. Note that the perfect matchings $\pi_6,\dots,\pi_{10}$ are represented by the toric divisor $D_0$ so that they are internal matchings.

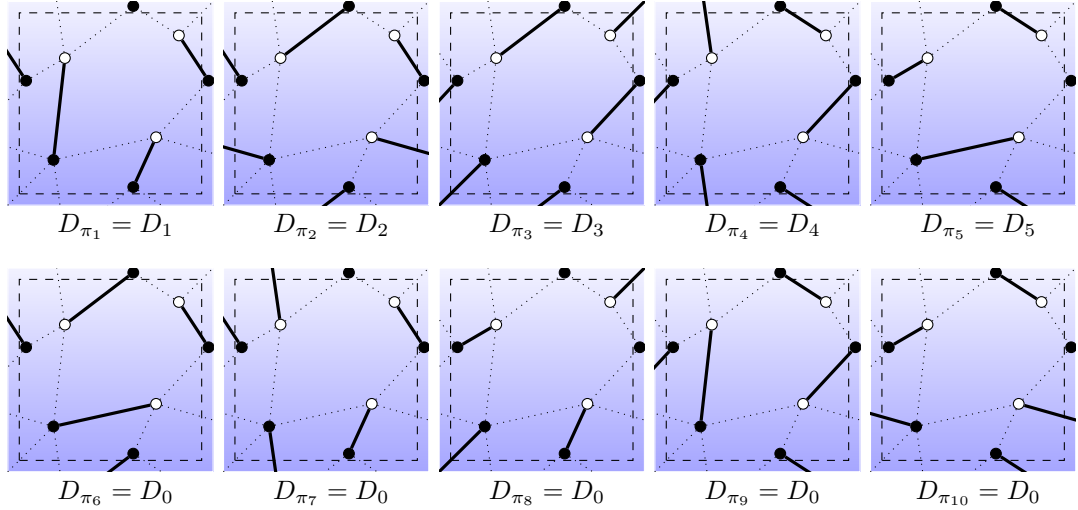
\begin{figure}[t]
\renewcommand{\thesubfigure}{\relax}
\begin{center}
\mbox{
\subfigure[$D_{\pi_1}=D_1$]{\begin{tikzpicture}[scale=0.30,baseline, inner sep=0.5mm]
\clip (-0.5,-0.5) rectangle (8.5,8.5);
\shadedraw[top color=blue!5, bottom color=blue!35, draw=white] (-0.5,-0.5) rectangle (8.5,8.5);

\draw[dashed] (0,0) rectangle (8,8);

\node (b1) at (0.3,5) [circle,draw=black,fill=black] {};
\node (b1s) at (8.3,5) [circle,draw=black,fill=black] {};
\node (b2) at (1.5,1.5) [circle,draw=black,fill=black] {};
\node (b2s) at (9.5,1.5) [circle,draw=black,fill=black] {};
\node (b2ss) at (9.5,9.5) [circle,draw=black,fill=black] {};
\node (b2sss) at (1.5,9.5) [circle,draw=black,fill=black] {};
\node (w1) at (2,6) [circle,draw=black,fill=white] {};
\node (w1ss) at (2,-2) [circle,draw=black,fill=white] {};
\node (w1s) at (10,6) [circle,draw=black,fill=white] {};
\node (b3) at (5,0.3) [circle,draw=black,fill=black] {};
\node (b3s) at (5,8.3) [circle,draw=black,fill=black] {};
\node (w2) at (6,2.5) [circle,draw=black,fill=white] {};
\node (w2s) at (6,10.5) [circle,draw=black,fill=white] {};
\node (w2ss) at (-2,2.5) [circle,draw=black,fill=white] {};
\node (w3) at (7,7) [circle,draw=black,fill=white] {};
\node (w3s) at (-1,7) [circle,draw=black,fill=white] {};
\node (w3ss) at (-1,-1) [circle,draw=black,fill=white] {};
\node (w3sss) at (7,-1) [circle,draw=black,fill=white] {};

\draw[-, very thick] (b2) -- (w1); 
\draw[-, dotted] (b2) -- (w2); 
\draw[-, very thick] (w2) -- (b3); 
\draw[-, dotted] (w2) -- (b1s); 
\draw[-, very thick] (w3) -- (b1s); 
\draw[-, dotted] (w3) -- (b3s); 
\draw[-, dotted] (w1) -- (b3s); 
\draw[-, dotted] (w1) -- (b1); 
\draw[-, dotted] (w2) -- (b2s); 
\draw[-, dotted] (w1s) -- (b1s); 
\draw[-, dotted] (w3) -- (b2ss); 
\draw[-, dotted] (w2s) -- (b3s); 
\draw[-, dotted] (w1) -- (b2sss); 
\draw[-, very thick] (b1) -- (w3s); 
\draw[-, dotted] (b1) -- (w2ss); 
\draw[-, dotted] (b2) -- (w2ss); 
\draw[-, dotted] (b2) -- (w1ss); 
\draw[-, dotted] (b2) -- (w3ss); 
\draw[-, dotted] (b3) -- (w1ss); 
\draw[-, dotted] (b3) -- (w3sss); 
\end{tikzpicture}}

\subfigure[$D_{\pi_2}=D_2$]{\begin{tikzpicture}[scale=.3,baseline, inner sep= 0.5mm]
\clip (-0.5,-0.5) rectangle (8.5,8.5);
\shadedraw[top color=blue!5, bottom color=blue!35, draw=white] (-0.5,-0.5) rectangle (8.5,8.5);

\draw[dashed] (0,0) rectangle (8,8);

\node (b1) at (0.3,5) [circle,draw=black,fill=black] {};
\node (b1s) at (8.3,5) [circle,draw=black,fill=black] {};
\node (b2) at (1.5,1.5) [circle,draw=black,fill=black] {};
\node (b2s) at (9.5,1.5) [circle,draw=black,fill=black] {};
\node (b2ss) at (9.5,9.5) [circle,draw=black,fill=black] {};
\node (b2sss) at (1.5,9.5) [circle,draw=black,fill=black] {};
\node (w1) at (2,6) [circle,draw=black,fill=white] {};
\node (w1ss) at (2,-2) [circle,draw=black,fill=white] {};
\node (w1s) at (10,6) [circle,draw=black,fill=white] {};
\node (b3) at (5,0.3) [circle,draw=black,fill=black] {};
\node (b3s) at (5,8.3) [circle,draw=black,fill=black] {};
\node (w2) at (6,2.5) [circle,draw=black,fill=white] {};
\node (w2s) at (6,10.5) [circle,draw=black,fill=white] {};
\node (w2ss) at (-2,2.5) [circle,draw=black,fill=white] {};
\node (w3) at (7,7) [circle,draw=black,fill=white] {};
\node (w3s) at (-1,7) [circle,draw=black,fill=white] {};
\node (w3ss) at (-1,-1) [circle,draw=black,fill=white] {};
\node (w3sss) at (7,-1) [circle,draw=black,fill=white] {};

\draw[-, dotted] (b2) -- (w1); 
\draw[-, dotted] (b2) -- (w2); 
\draw[-, dotted] (w2) -- (b3); 
\draw[-, dotted] (w2) -- (b1s); 
\draw[-, very thick] (w3) -- (b1s); 
\draw[-, dotted] (w3) -- (b3s); 
\draw[-, very thick] (w1) -- (b3s); 
\draw[-, dotted] (w1) -- (b1); 
\draw[-, very thick] (w2) -- (b2s); 
\draw[-, dotted] (w1s) -- (b1s); 
\draw[-, dotted] (w3) -- (b2ss); 
\draw[-, dotted] (w2s) -- (b3s); 
\draw[-, dotted] (w1) -- (b2sss); 
\draw[-, very thick] (b1) -- (w3s); 
\draw[-, dotted] (b1) -- (w2ss); 
\draw[-, very thick] (b2) -- (w2ss); 
\draw[-, dotted] (b2) -- (w1ss); 
\draw[-, dotted] (b2) -- (w3ss); 
\draw[-, very thick] (b3) -- (w1ss); 
\draw[-, dotted] (b3) -- (w3sss); 

\end{tikzpicture}}

\subfigure[$D_{\pi_3}=D_3$]{\begin{tikzpicture}[scale=.3,baseline, inner sep= 0.5mm]
\clip (-0.5,-0.5) rectangle (8.5,8.5);
\shadedraw[top color=blue!5, bottom color=blue!35, draw=white] (-0.5,-0.5) rectangle (8.5,8.5);

\draw[dashed] (0,0) rectangle (8,8);

\node (b1) at (0.3,5) [circle,draw=black,fill=black] {};
\node (b1s) at (8.3,5) [circle,draw=black,fill=black] {};
\node (b2) at (1.5,1.5) [circle,draw=black,fill=black] {};
\node (b2s) at (9.5,1.5) [circle,draw=black,fill=black] {};
\node (b2ss) at (9.5,9.5) [circle,draw=black,fill=black] {};
\node (b2sss) at (1.5,9.5) [circle,draw=black,fill=black] {};
\node (w1) at (2,6) [circle,draw=black,fill=white] {};
\node (w1ss) at (2,-2) [circle,draw=black,fill=white] {};
\node (w1s) at (10,6) [circle,draw=black,fill=white] {};
\node (b3) at (5,0.3) [circle,draw=black,fill=black] {};
\node (b3s) at (5,8.3) [circle,draw=black,fill=black] {};
\node (w2) at (6,2.5) [circle,draw=black,fill=white] {};
\node (w2s) at (6,10.5) [circle,draw=black,fill=white] {};
\node (w2ss) at (-2,2.5) [circle,draw=black,fill=white] {};
\node (w3) at (7,7) [circle,draw=black,fill=white] {};
\node (w3s) at (-1,7) [circle,draw=black,fill=white] {};
\node (w3ss) at (-1,-1) [circle,draw=black,fill=white] {};
\node (w3sss) at (7,-1) [circle,draw=black,fill=white] {};

\draw[-, dotted] (b2) -- (w1); 
\draw[-, dotted] (b2) -- (w2); 
\draw[-, dotted] (w2) -- (b3); 
\draw[-, very thick] (w2) -- (b1s); 
\draw[-, dotted] (w3) -- (b1s); 
\draw[-, dotted] (w3) -- (b3s); 
\draw[-, very thick] (w1) -- (b3s); 
\draw[-, dotted] (w1) -- (b1); 
\draw[-, dotted] (w2) -- (b2s); 
\draw[-, dotted] (w1s) -- (b1s); 
\draw[-, very thick] (w3) -- (b2ss); 
\draw[-, dotted] (w2s) -- (b3s); 
\draw[-, dotted] (w1) -- (b2sss); 
\draw[-, dotted] (b1) -- (w3s); 
\draw[-, very thick] (b1) -- (w2ss); 
\draw[-, dotted] (b2) -- (w2ss); 
\draw[-, dotted] (b2) -- (w1ss); 
\draw[-, very thick] (b2) -- (w3ss); 
\draw[-, very thick] (b3) -- (w1ss); 
\draw[-, dotted] (b3) -- (w3sss); 

\end{tikzpicture}}

\subfigure[$D_{\pi_4}=D_4$]{\begin{tikzpicture}[scale=.3,baseline, inner sep= 0.5mm]
\clip (-0.5,-0.5) rectangle (8.5,8.5);
\shadedraw[top color=blue!5, bottom color=blue!35, draw=white] (-0.5,-0.5) rectangle (8.5,8.5);

\draw[dashed] (0,0) rectangle (8,8);

\node (b1) at (0.3,5) [circle,draw=black,fill=black] {};
\node (b1s) at (8.3,5) [circle,draw=black,fill=black] {};
\node (b2) at (1.5,1.5) [circle,draw=black,fill=black] {};
\node (b2s) at (9.5,1.5) [circle,draw=black,fill=black] {};
\node (b2ss) at (9.5,9.5) [circle,draw=black,fill=black] {};
\node (b2sss) at (1.5,9.5) [circle,draw=black,fill=black] {};
\node (w1) at (2,6) [circle,draw=black,fill=white] {};
\node (w1ss) at (2,-2) [circle,draw=black,fill=white] {};
\node (w1s) at (10,6) [circle,draw=black,fill=white] {};
\node (b3) at (5,0.3) [circle,draw=black,fill=black] {};
\node (b3s) at (5,8.3) [circle,draw=black,fill=black] {};
\node (w2) at (6,2.5) [circle,draw=black,fill=white] {};
\node (w2s) at (6,10.5) [circle,draw=black,fill=white] {};
\node (w2ss) at (-2,2.5) [circle,draw=black,fill=white] {};
\node (w3) at (7,7) [circle,draw=black,fill=white] {};
\node (w3s) at (-1,7) [circle,draw=black,fill=white] {};
\node (w3ss) at (-1,-1) [circle,draw=black,fill=white] {};
\node (w3sss) at (7,-1) [circle,draw=black,fill=white] {};

\draw[-, dotted] (b2) -- (w1); 
\draw[-, dotted] (b2) -- (w2); 
\draw[-, dotted] (w2) -- (b3); 
\draw[-, very thick] (w2) -- (b1s); 
\draw[-, dotted] (w3) -- (b1s); 
\draw[-, very thick] (w3) -- (b3s); 
\draw[-, dotted] (w1) -- (b3s); 
\draw[-, dotted] (w1) -- (b1); 
\draw[-, dotted] (w2) -- (b2s); 
\draw[-, dotted] (w1s) -- (b1s); 
\draw[-, dotted] (w3) -- (b2ss); 
\draw[-, dotted] (w2s) -- (b3s); 
\draw[-, very thick] (w1) -- (b2sss); 
\draw[-, dotted] (b1) -- (w3s); 
\draw[-, very thick] (b1) -- (w2ss); 
\draw[-, dotted] (b2) -- (w2ss); 
\draw[-, very thick] (b2) -- (w1ss); 
\draw[-, dotted] (b2) -- (w3ss); 
\draw[-, dotted] (b3) -- (w1ss); 
\draw[-, very thick] (b3) -- (w3sss); 

\end{tikzpicture}}

\subfigure[$D_{\pi_5}=D_5$]{\begin{tikzpicture}[scale=.3,baseline, inner sep= 0.5mm]
\clip (-0.5,-0.5) rectangle (8.5,8.5);
\shadedraw[top color=blue!5, bottom color=blue!35, draw=white] (-0.5,-0.5) rectangle (8.5,8.5);

\draw[dashed] (0,0) rectangle (8,8);

\node (b1) at (0.3,5) [circle,draw=black,fill=black] {};
\node (b1s) at (8.3,5) [circle,draw=black,fill=black] {};
\node (b2) at (1.5,1.5) [circle,draw=black,fill=black] {};
\node (b2s) at (9.5,1.5) [circle,draw=black,fill=black] {};
\node (b2ss) at (9.5,9.5) [circle,draw=black,fill=black] {};
\node (b2sss) at (1.5,9.5) [circle,draw=black,fill=black] {};
\node (w1) at (2,6) [circle,draw=black,fill=white] {};
\node (w1ss) at (2,-2) [circle,draw=black,fill=white] {};
\node (w1s) at (10,6) [circle,draw=black,fill=white] {};
\node (b3) at (5,0.3) [circle,draw=black,fill=black] {};
\node (b3s) at (5,8.3) [circle,draw=black,fill=black] {};
\node (w2) at (6,2.5) [circle,draw=black,fill=white] {};
\node (w2s) at (6,10.5) [circle,draw=black,fill=white] {};
\node (w2ss) at (-2,2.5) [circle,draw=black,fill=white] {};
\node (w3) at (7,7) [circle,draw=black,fill=white] {};
\node (w3s) at (-1,7) [circle,draw=black,fill=white] {};
\node (w3ss) at (-1,-1) [circle,draw=black,fill=white] {};
\node (w3sss) at (7,-1) [circle,draw=black,fill=white] {};

\draw[-, dotted] (b2) -- (w1); 
\draw[-, very thick] (b2) -- (w2); 
\draw[-, dotted] (w2) -- (b3); 
\draw[-, dotted] (w2) -- (b1s); 
\draw[-, dotted] (w3) -- (b1s); 
\draw[-, very thick] (w3) -- (b3s); 
\draw[-, dotted] (w1) -- (b3s); 
\draw[-, very thick] (w1) -- (b1); 
\draw[-, dotted] (w2) -- (b2s); 
\draw[-, dotted] (w1s) -- (b1s); 
\draw[-, dotted] (w3) -- (b2ss); 
\draw[-, dotted] (w2s) -- (b3s); 
\draw[-, dotted] (w1) -- (b2sss); 
\draw[-, dotted] (b1) -- (w3s); 
\draw[-, dotted] (b1) -- (w2ss); 
\draw[-, dotted] (b2) -- (w2ss); 
\draw[-, dotted] (b2) -- (w1ss); 
\draw[-, dotted] (b2) -- (w3ss); 
\draw[-, dotted] (b3) -- (w1ss); 
\draw[-, very thick] (b3) -- (w3sss); 

\end{tikzpicture}}
}

\mbox{
\subfigure[$D_{\pi_6}=D_0$]{\begin{tikzpicture}[scale=0.30,baseline, inner sep=0.5mm]
\clip (-0.5,-0.5) rectangle (8.5,8.5);
\shadedraw[top color=blue!5, bottom color=blue!35, draw=white] (-0.5,-0.5) rectangle (8.5,8.5);

\draw[dashed] (0,0) rectangle (8,8);

\node (b1) at (0.3,5) [circle,draw=black,fill=black] {};
\node (b1s) at (8.3,5) [circle,draw=black,fill=black] {};
\node (b2) at (1.5,1.5) [circle,draw=black,fill=black] {};
\node (b2s) at (9.5,1.5) [circle,draw=black,fill=black] {};
\node (b2ss) at (9.5,9.5) [circle,draw=black,fill=black] {};
\node (b2sss) at (1.5,9.5) [circle,draw=black,fill=black] {};
\node (w1) at (2,6) [circle,draw=black,fill=white] {};
\node (w1ss) at (2,-2) [circle,draw=black,fill=white] {};
\node (w1s) at (10,6) [circle,draw=black,fill=white] {};
\node (b3) at (5,0.3) [circle,draw=black,fill=black] {};
\node (b3s) at (5,8.3) [circle,draw=black,fill=black] {};
\node (w2) at (6,2.5) [circle,draw=black,fill=white] {};
\node (w2s) at (6,10.5) [circle,draw=black,fill=white] {};
\node (w2ss) at (-2,2.5) [circle,draw=black,fill=white] {};
\node (w3) at (7,7) [circle,draw=black,fill=white] {};
\node (w3s) at (-1,7) [circle,draw=black,fill=white] {};
\node (w3ss) at (-1,-1) [circle,draw=black,fill=white] {};
\node (w3sss) at (7,-1) [circle,draw=black,fill=white] {};

\draw[-, dotted] (b2) -- (w1); 
\draw[-, very thick] (b2) -- (w2); 
\draw[-, dotted] (w2) -- (b3); 
\draw[-, dotted] (w2) -- (b1s); 
\draw[-, very thick] (w3) -- (b1s); 
\draw[-, dotted] (w3) -- (b3s); 
\draw[-, very thick] (w1) -- (b3s); 
\draw[-, dotted] (w1) -- (b1); 
\draw[-, dotted] (w2) -- (b2s); 
\draw[-, dotted] (w1s) -- (b1s); 
\draw[-, dotted] (w3) -- (b2ss); 
\draw[-, dotted] (w2s) -- (b3s); 
\draw[-, dotted] (w1) -- (b2sss); 
\draw[-, very thick] (b1) -- (w3s); 
\draw[-, dotted] (b1) -- (w2ss); 
\draw[-, dotted] (b2) -- (w2ss); 
\draw[-, dotted] (b2) -- (w1ss); 
\draw[-, dotted] (b2) -- (w3ss); 
\draw[-, very thick] (b3) -- (w1ss); 
\draw[-, dotted] (b3) -- (w3sss); 
\end{tikzpicture}}

\subfigure[$D_{\pi_7}=D_0$]{\begin{tikzpicture}[scale=.3,baseline, inner sep= 0.5mm]
\clip (-0.5,-0.5) rectangle (8.5,8.5);
\shadedraw[top color=blue!5, bottom color=blue!35, draw=white] (-0.5,-0.5) rectangle (8.5,8.5);

\draw[dashed] (0,0) rectangle (8,8);

\node (b1) at (0.3,5) [circle,draw=black,fill=black] {};
\node (b1s) at (8.3,5) [circle,draw=black,fill=black] {};
\node (b2) at (1.5,1.5) [circle,draw=black,fill=black] {};
\node (b2s) at (9.5,1.5) [circle,draw=black,fill=black] {};
\node (b2ss) at (9.5,9.5) [circle,draw=black,fill=black] {};
\node (b2sss) at (1.5,9.5) [circle,draw=black,fill=black] {};
\node (w1) at (2,6) [circle,draw=black,fill=white] {};
\node (w1ss) at (2,-2) [circle,draw=black,fill=white] {};
\node (w1s) at (10,6) [circle,draw=black,fill=white] {};
\node (b3) at (5,0.3) [circle,draw=black,fill=black] {};
\node (b3s) at (5,8.3) [circle,draw=black,fill=black] {};
\node (w2) at (6,2.5) [circle,draw=black,fill=white] {};
\node (w2s) at (6,10.5) [circle,draw=black,fill=white] {};
\node (w2ss) at (-2,2.5) [circle,draw=black,fill=white] {};
\node (w3) at (7,7) [circle,draw=black,fill=white] {};
\node (w3s) at (-1,7) [circle,draw=black,fill=white] {};
\node (w3ss) at (-1,-1) [circle,draw=black,fill=white] {};
\node (w3sss) at (7,-1) [circle,draw=black,fill=white] {};

\draw[-, dotted] (b2) -- (w1); 
\draw[-, dotted] (b2) -- (w2); 
\draw[-, very thick] (w2) -- (b3); 
\draw[-, dotted] (w2) -- (b1s); 
\draw[-, very thick] (w3) -- (b1s); 
\draw[-, dotted] (w3) -- (b3s); 
\draw[-, dotted] (w1) -- (b3s); 
\draw[-, dotted] (w1) -- (b1); 
\draw[-, dotted] (w2) -- (b2s); 
\draw[-, dotted] (w1s) -- (b1s); 
\draw[-, dotted] (w3) -- (b2ss); 
\draw[-, dotted] (w2s) -- (b3s); 
\draw[-, very thick] (w1) -- (b2sss); 
\draw[-, very thick] (b1) -- (w3s); 
\draw[-, dotted] (b1) -- (w2ss); 
\draw[-, dotted] (b2) -- (w2ss); 
\draw[-, very thick] (b2) -- (w1ss); 
\draw[-, dotted] (b2) -- (w3ss); 
\draw[-, dotted] (b3) -- (w1ss); 
\draw[-, dotted] (b3) -- (w3sss); 

\end{tikzpicture}}

\subfigure[$D_{\pi_8}=D_0$]{\begin{tikzpicture}[scale=.3,baseline, inner sep= 0.5mm]
\clip (-0.5,-0.5) rectangle (8.5,8.5);
\shadedraw[top color=blue!5, bottom color=blue!35, draw=white] (-0.5,-0.5) rectangle (8.5,8.5);

\draw[dashed] (0,0) rectangle (8,8);

\node (b1) at (0.3,5) [circle,draw=black,fill=black] {};
\node (b1s) at (8.3,5) [circle,draw=black,fill=black] {};
\node (b2) at (1.5,1.5) [circle,draw=black,fill=black] {};
\node (b2s) at (9.5,1.5) [circle,draw=black,fill=black] {};
\node (b2ss) at (9.5,9.5) [circle,draw=black,fill=black] {};
\node (b2sss) at (1.5,9.5) [circle,draw=black,fill=black] {};
\node (w1) at (2,6) [circle,draw=black,fill=white] {};
\node (w1ss) at (2,-2) [circle,draw=black,fill=white] {};
\node (w1s) at (10,6) [circle,draw=black,fill=white] {};
\node (b3) at (5,0.3) [circle,draw=black,fill=black] {};
\node (b3s) at (5,8.3) [circle,draw=black,fill=black] {};
\node (w2) at (6,2.5) [circle,draw=black,fill=white] {};
\node (w2s) at (6,10.5) [circle,draw=black,fill=white] {};
\node (w2ss) at (-2,2.5) [circle,draw=black,fill=white] {};
\node (w3) at (7,7) [circle,draw=black,fill=white] {};
\node (w3s) at (-1,7) [circle,draw=black,fill=white] {};
\node (w3ss) at (-1,-1) [circle,draw=black,fill=white] {};
\node (w3sss) at (7,-1) [circle,draw=black,fill=white] {};

\draw[-, dotted] (b2) -- (w1); 
\draw[-, dotted] (b2) -- (w2); 
\draw[-, very thick] (w2) -- (b3); 
\draw[-, dotted] (w2) -- (b1s); 
\draw[-, dotted] (w3) -- (b1s); 
\draw[-, dotted] (w3) -- (b3s); 
\draw[-, dotted] (w1) -- (b3s); 
\draw[-, very thick] (w1) -- (b1); 
\draw[-, dotted] (w2) -- (b2s); 
\draw[-, dotted] (w1s) -- (b1s); 
\draw[-, very thick] (w3) -- (b2ss); 
\draw[-, dotted] (w2s) -- (b3s); 
\draw[-, dotted] (w1) -- (b2sss); 
\draw[-, dotted] (b1) -- (w3s); 
\draw[-, dotted] (b1) -- (w2ss); 
\draw[-, dotted] (b2) -- (w2ss); 
\draw[-, dotted] (b2) -- (w1ss); 
\draw[-, very thick] (b2) -- (w3ss); 
\draw[-, dotted] (b3) -- (w1ss); 
\draw[-, dotted] (b3) -- (w3sss); 

\end{tikzpicture}}

\subfigure[$D_{\pi_{9}}=D_0$]{\begin{tikzpicture}[scale=.3,baseline, inner sep= 0.5mm]
\clip (-0.5,-0.5) rectangle (8.5,8.5);
\shadedraw[top color=blue!5, bottom color=blue!35, draw=white] (-0.5,-0.5) rectangle (8.5,8.5);

\draw[dashed] (0,0) rectangle (8,8);

\node (b1) at (0.3,5) [circle,draw=black,fill=black] {};
\node (b1s) at (8.3,5) [circle,draw=black,fill=black] {};
\node (b2) at (1.5,1.5) [circle,draw=black,fill=black] {};
\node (b2s) at (9.5,1.5) [circle,draw=black,fill=black] {};
\node (b2ss) at (9.5,9.5) [circle,draw=black,fill=black] {};
\node (b2sss) at (1.5,9.5) [circle,draw=black,fill=black] {};
\node (w1) at (2,6) [circle,draw=black,fill=white] {};
\node (w1ss) at (2,-2) [circle,draw=black,fill=white] {};
\node (w1s) at (10,6) [circle,draw=black,fill=white] {};
\node (b3) at (5,0.3) [circle,draw=black,fill=black] {};
\node (b3s) at (5,8.3) [circle,draw=black,fill=black] {};
\node (w2) at (6,2.5) [circle,draw=black,fill=white] {};
\node (w2s) at (6,10.5) [circle,draw=black,fill=white] {};
\node (w2ss) at (-2,2.5) [circle,draw=black,fill=white] {};
\node (w3) at (7,7) [circle,draw=black,fill=white] {};
\node (w3s) at (-1,7) [circle,draw=black,fill=white] {};
\node (w3ss) at (-1,-1) [circle,draw=black,fill=white] {};
\node (w3sss) at (7,-1) [circle,draw=black,fill=white] {};

\draw[-, very thick] (b2) -- (w1); 
\draw[-, dotted] (b2) -- (w2); 
\draw[-, dotted] (w2) -- (b3); 
\draw[-, very thick] (w2) -- (b1s); 
\draw[-, dotted] (w3) -- (b1s); 
\draw[-, very thick] (w3) -- (b3s); 
\draw[-, dotted] (w1) -- (b3s); 
\draw[-, dotted] (w1) -- (b1); 
\draw[-, dotted] (w2) -- (b2s); 
\draw[-, dotted] (w1s) -- (b1s); 
\draw[-, dotted] (w3) -- (b2ss); 
\draw[-, dotted] (w2s) -- (b3s); 
\draw[-, dotted] (w1) -- (b2sss); 
\draw[-, dotted] (b1) -- (w3s); 
\draw[-, very thick] (b1) -- (w2ss); 
\draw[-, dotted] (b2) -- (w2ss); 
\draw[-, dotted] (b2) -- (w1ss); 
\draw[-, dotted] (b2) -- (w3ss); 
\draw[-, dotted] (b3) -- (w1ss); 
\draw[-, very thick] (b3) -- (w3sss); 

\end{tikzpicture}}

\subfigure[$D_{\pi_{10}}=D_0$]{\begin{tikzpicture}[scale=.3,baseline, inner sep= 0.5mm]
\clip (-0.5,-0.5) rectangle (8.5,8.5);
\shadedraw[top color=blue!5, bottom color=blue!35, draw=white] (-0.5,-0.5) rectangle (8.5,8.5);

\draw[dashed] (0,0) rectangle (8,8);

\node (b1) at (0.3,5) [circle,draw=black,fill=black] {};
\node (b1s) at (8.3,5) [circle,draw=black,fill=black] {};
\node (b2) at (1.5,1.5) [circle,draw=black,fill=black] {};
\node (b2s) at (9.5,1.5) [circle,draw=black,fill=black] {};
\node (b2ss) at (9.5,9.5) [circle,draw=black,fill=black] {};
\node (b2sss) at (1.5,9.5) [circle,draw=black,fill=black] {};
\node (w1) at (2,6) [circle,draw=black,fill=white] {};
\node (w1ss) at (2,-2) [circle,draw=black,fill=white] {};
\node (w1s) at (10,6) [circle,draw=black,fill=white] {};
\node (b3) at (5,0.3) [circle,draw=black,fill=black] {};
\node (b3s) at (5,8.3) [circle,draw=black,fill=black] {};
\node (w2) at (6,2.5) [circle,draw=black,fill=white] {};
\node (w2s) at (6,10.5) [circle,draw=black,fill=white] {};
\node (w2ss) at (-2,2.5) [circle,draw=black,fill=white] {};
\node (w3) at (7,7) [circle,draw=black,fill=white] {};
\node (w3s) at (-1,7) [circle,draw=black,fill=white] {};
\node (w3ss) at (-1,-1) [circle,draw=black,fill=white] {};
\node (w3sss) at (7,-1) [circle,draw=black,fill=white] {};

\draw[-, dotted] (b2) -- (w1); 
\draw[-, dotted] (b2) -- (w2); 
\draw[-, dotted] (w2) -- (b3); 
\draw[-, dotted] (w2) -- (b1s); 
\draw[-, dotted] (w3) -- (b1s); 
\draw[-, very thick] (w3) -- (b3s); 
\draw[-, dotted] (w1) -- (b3s); 
\draw[-, very thick] (w1) -- (b1); 
\draw[-, very thick] (w2) -- (b2s); 
\draw[-, dotted] (w1s) -- (b1s); 
\draw[-, dotted] (w3) -- (b2ss); 
\draw[-, dotted] (w2s) -- (b3s); 
\draw[-, dotted] (w1) -- (b2sss); 
\draw[-, dotted] (b1) -- (w3s); 
\draw[-, dotted] (b1) -- (w2ss); 
\draw[-, very thick] (b2) -- (w2ss); 
\draw[-, dotted] (b2) -- (w1ss); 
\draw[-, dotted] (b2) -- (w3ss); 
\draw[-, dotted] (b3) -- (w1ss); 
\draw[-, very thick] (b3) -- (w3sss); 

\end{tikzpicture}}
}
\end{center}
\caption{The ten perfect matchings of the $\mathrm{dP}_2$ tiling. The prime toric divisors corresponding to each perfect matching are indicated.} 
\label{fig:matchings} 
\end{figure}

Now let $\mathscr{M}_{\theta}$ be the fine moduli space of $\theta$-stable representations of the algebra $A$ associated to the $\mathrm{dP}_2$ tiling for $\theta=(1,1,1,1,-4)$. If we compute the toric fan of $\mathscr{M}_{\theta}$ as described in \cite{bm0909.2013}, we find that it coincides with that of the total space of the canonical bundle of $\mathrm{dP}_2$. Furthermore, we also verify at once that $\pi_{10}$ is the only $\theta$-stable internal perfect matching. 

To determine the tilting bundle on the canonical bundle of $\mathrm{dP}_2$, we need to fix walks in the periodic quiver that connect the vertex $0$ to all other vertices. Our choice of walks is shown in figure~\ref{fig:walks}. If we choose the trivial walk $\gamma_0$ from~$0$ to itself, the toric divisors computed from the prescription in section~\ref{sec:derivedcategory} are 
\begin{align*}
D(\gamma_0)&=0\, , \\
D(\gamma_1)&= D_1\, ,\\
D(\gamma_2)&= D_1+D_5\, ,\\
D(\gamma_3)&= D_1 + D_4 +D_5\, ,\\
D(\gamma_4)&= D_0+D_1+D_3 +D_4+D_5\, .
\end{align*}
Invoking theorem~\ref{thm:tilting}, we conclude that
\begin{align*}
\Ts = \Os &\oplus \Os(D_1) \oplus \Os( D_1+D_5) \\
& \qquad\oplus \Os(D_1 + D_4 +D_5) \oplus \Os( D_0+D_1+D_3 +D_4+D_5)
\end{align*}
is a tilting object on the canonical bundle of $\mathrm{dP}_2$. 

\begin{figure}
\renewcommand{\thesubfigure}{\relax}
\begin{center}
\subfigure[$\gamma_1\colon 0 \fl 1$]{\begin{tikzpicture}[scale=0.36,baseline, inner sep=0.5mm, >=stealth]
\clip (-0.5,-0.5) rectangle (8.5,8.5);
\shadedraw[top color=blue!5, bottom color=blue!35, draw=white] (-0.5,-0.5) rectangle (8.5,8.5);
\draw[dashed] (0,0) rectangle (8,8);

\node (0) at (0.7,7) [circle,draw=black,fill=white] {3};
\node (0u) at (0.7,-1) [circle,draw=black,fill=white] {3};
\node (0r) at (8.7,7) [circle,draw=black,fill=white] {3};
\node (0ur) at (8.7,-1) [circle,draw=black,fill=white] {3};

\node (1) at (0.7,3.3) [circle,draw=black,fill=white] {0};
\node (1r) at (8.7,3.3) [circle,draw=black,fill=white] {0};

\node (2) at (7,0.8) [circle,draw=black,fill=white] {4};
\node (2o) at (7,8.8) [circle,draw=black,fill=white] {4};
\node (2l) at (-1,0.8) [circle,draw=black,fill=white] {4};
\node (2ol) at (-1,8.8) [circle,draw=black,fill=white] {4};

\node (3) at (3.3,0.8) [circle,draw=black,fill=white] {2};
\node (3o) at (3.3,8.8) [circle,draw=black,fill=white] {2};

\node (4) at (4.7,4.8) [circle,draw=black,fill=white] {1};
\node (4u) at (4.7,-3.2) [circle,draw=black,fill=white] {1};
\node (4l) at (-3.3,4.8) [circle,draw=black,fill=white] {1};

\draw[->, dotted] (0) -- (1); 
\draw[->, very thick] (1) -- (4); 
\draw[->, dotted] (4) -- (3); 
\draw[->, dotted] (3) -- (2); 
\draw[->, dotted] (1r) -- (4); 
\draw[->, dotted] (2o) -- (4); 
\draw[->, dotted] (2) -- (1r); 
\draw[->, dotted] (4) -- (0r); 
\draw[->, dotted] (4) -- (3o); 
\draw[->, dotted] (4u) -- (3); 
\draw[->, dotted] (3) -- (0u); 
\draw[->, dotted] (2) -- (4u); 
\draw[->, dotted] (0ur) -- (2); 
\draw[->, dotted] (1) -- (4l); 
\draw[->, dotted] (2l) -- (1); 
\draw[->, dotted] (3o) -- (0); 
\draw[->, dotted] (4l) -- (0); 
\draw[->, dotted] (0) -- (2ol); 
\draw[->, dotted] (0r) -- (2o); 
\draw[->, dotted] (0u) -- (2l); 

\end{tikzpicture}}
\subfigure[$\gamma_2\colon 0 \fl 2$]{\begin{tikzpicture}[scale=0.36,baseline, inner sep=0.5mm, >=stealth]
\clip (-0.5,-0.5) rectangle (8.5,8.5);
\shadedraw[top color=blue!5, bottom color=blue!35, draw=white] (-0.5,-0.5) rectangle (8.5,8.5);
\draw[dashed] (0,0) rectangle (8,8);

\node (0) at (0.7,7) [circle,draw=black,fill=white] {3};
\node (0u) at (0.7,-1) [circle,draw=black,fill=white] {3};
\node (0r) at (8.7,7) [circle,draw=black,fill=white] {3};
\node (0ur) at (8.7,-1) [circle,draw=black,fill=white] {3};

\node (1) at (0.7,3.3) [circle,draw=black,fill=white] {0};
\node (1r) at (8.7,3.3) [circle,draw=black,fill=white] {0};

\node (2) at (7,0.8) [circle,draw=black,fill=white] {4};
\node (2o) at (7,8.8) [circle,draw=black,fill=white] {4};
\node (2l) at (-1,0.8) [circle,draw=black,fill=white] {4};
\node (2ol) at (-1,8.8) [circle,draw=black,fill=white] {4};

\node (3) at (3.3,0.8) [circle,draw=black,fill=white] {2};
\node (3o) at (3.3,8.8) [circle,draw=black,fill=white] {2};

\node (4) at (4.7,4.8) [circle,draw=black,fill=white] {1};
\node (4u) at (4.7,-3.2) [circle,draw=black,fill=white] {1};
\node (4l) at (-3.3,4.8) [circle,draw=black,fill=white] {1};

\draw[->, dotted] (0) -- (1); 
\draw[->, very thick] (1) -- (4); 
\draw[->, very thick] (4) -- (3); 
\draw[->, dotted] (3) -- (2); 
\draw[->, dotted] (1r) -- (4); 
\draw[->, dotted] (2o) -- (4); 
\draw[->, dotted] (2) -- (1r); 
\draw[->, dotted] (4) -- (0r); 
\draw[->, dotted] (4) -- (3o); 
\draw[->, dotted] (4u) -- (3); 
\draw[->, dotted] (3) -- (0u); 
\draw[->, dotted] (2) -- (4u); 
\draw[->, dotted] (0ur) -- (2); 
\draw[->, dotted] (1) -- (4l); 
\draw[->, dotted] (2l) -- (1); 
\draw[->, dotted] (3o) -- (0); 
\draw[->, dotted] (4l) -- (0); 
\draw[->, dotted] (0) -- (2ol); 
\draw[->, dotted] (0r) -- (2o); 
\draw[->, dotted] (0u) -- (2l); 

\end{tikzpicture}}
\subfigure[$\gamma_3\colon 0 \fl 3$]{\begin{tikzpicture}[scale=0.36,baseline, inner sep=0.5mm, >=stealth]
\clip (-0.5,-0.5) rectangle (8.5,8.5);
\shadedraw[top color=blue!5, bottom color=blue!35, draw=white] (-0.5,-0.5) rectangle (8.5,8.5);
\draw[dashed] (0,0) rectangle (8,8);

\node (0) at (0.7,7) [circle,draw=black,fill=white] {3};
\node (0u) at (0.7,-1) [circle,draw=black,fill=white] {3};
\node (0r) at (8.7,7) [circle,draw=black,fill=white] {3};
\node (0ur) at (8.7,-1) [circle,draw=black,fill=white] {3};

\node (1) at (0.7,3.3) [circle,draw=black,fill=white] {0};
\node (1r) at (8.7,3.3) [circle,draw=black,fill=white] {0};

\node (2) at (7,0.8) [circle,draw=black,fill=white] {4};
\node (2o) at (7,8.8) [circle,draw=black,fill=white] {4};
\node (2l) at (-1,0.8) [circle,draw=black,fill=white] {4};
\node (2ol) at (-1,8.8) [circle,draw=black,fill=white] {4};

\node (3) at (3.3,0.8) [circle,draw=black,fill=white] {2};
\node (3o) at (3.3,8.8) [circle,draw=black,fill=white] {2};

\node (4) at (4.7,4.8) [circle,draw=black,fill=white] {1};
\node (4u) at (4.7,-3.2) [circle,draw=black,fill=white] {1};
\node (4l) at (-3.3,4.8) [circle,draw=black,fill=white] {1};

\draw[->, dotted] (0) -- (1); 
\draw[->, very thick] (1) -- (4); 
\draw[->, very thick] (4) -- (3); 
\draw[->, dotted] (3) -- (2); 
\draw[->, dotted] (1r) -- (4); 
\draw[->, dotted] (2o) -- (4); 
\draw[->, dotted] (2) -- (1r); 
\draw[->, dotted] (4) -- (0r); 
\draw[->, dotted] (4) -- (3o); 
\draw[->, dotted] (4u) -- (3); 
\draw[->, very thick] (3) -- (0u); 
\draw[->, dotted] (2) -- (4u); 
\draw[->, dotted] (0ur) -- (2); 
\draw[->, dotted] (1) -- (4l); 
\draw[->, dotted] (2l) -- (1); 
\draw[->, very thick] (3o) -- (0); 
\draw[->, dotted] (4l) -- (0); 
\draw[->, dotted] (0) -- (2ol); 
\draw[->, dotted] (0r) -- (2o); 
\draw[->, dotted] (0u) -- (2l); 

\end{tikzpicture}}
\subfigure[$\gamma_4\colon 0 \fl 4$]{\begin{tikzpicture}[scale=0.36,baseline, inner sep=0.5mm, >=stealth]
\clip (-0.5,-0.5) rectangle (8.5,8.5);
\shadedraw[top color=blue!5, bottom color=blue!35, draw=white] (-0.5,-0.5) rectangle (8.5,8.5);
\draw[dashed] (0,0) rectangle (8,8);

\node (0) at (0.7,7) [circle,draw=black,fill=white] {3};
\node (0u) at (0.7,-1) [circle,draw=black,fill=white] {3};
\node (0r) at (8.7,7) [circle,draw=black,fill=white] {3};
\node (0ur) at (8.7,-1) [circle,draw=black,fill=white] {3};

\node (1) at (0.7,3.3) [circle,draw=black,fill=white] {0};
\node (1r) at (8.7,3.3) [circle,draw=black,fill=white] {0};

\node (2) at (7,0.8) [circle,draw=black,fill=white] {4};
\node (2o) at (7,8.8) [circle,draw=black,fill=white] {4};
\node (2l) at (-1,0.8) [circle,draw=black,fill=white] {4};
\node (2ol) at (-1,8.8) [circle,draw=black,fill=white] {4};

\node (3) at (3.3,0.8) [circle,draw=black,fill=white] {2};
\node (3o) at (3.3,8.8) [circle,draw=black,fill=white] {2};

\node (4) at (4.7,4.8) [circle,draw=black,fill=white] {1};
\node (4u) at (4.7,-3.2) [circle,draw=black,fill=white] {1};
\node (4l) at (-3.3,4.8) [circle,draw=black,fill=white] {1};

\draw[->, dotted] (0) -- (1); 
\draw[->, very thick] (1) -- (4); 
\draw[->, very thick] (4) -- (3); 
\draw[->, dotted] (3) -- (2); 
\draw[->, dotted] (1r) -- (4); 
\draw[->, dotted] (2o) -- (4); 
\draw[->, dotted] (2) -- (1r); 
\draw[->, dotted] (4) -- (0r); 
\draw[->, dotted] (4) -- (3o); 
\draw[->, dotted] (4u) -- (3); 
\draw[->, very thick] (3) -- (0u); 
\draw[->, dotted] (2) -- (4u); 
\draw[->, very thick] (0ur) -- (2); 
\draw[->, dotted] (1) -- (4l); 
\draw[->, dotted] (2l) -- (1); 
\draw[->, very thick] (3o) -- (0); 
\draw[->, dotted] (4l) -- (0); 
\draw[->, very thick] (0) -- (2ol); 
\draw[->, very thick] (0r) -- (2o); 
\draw[->, very thick] (0u) -- (2l); 

\end{tikzpicture}}
\end{center}
\caption{A set of walks in the periodic quiver.} 
\label{fig:walks} 
\end{figure}
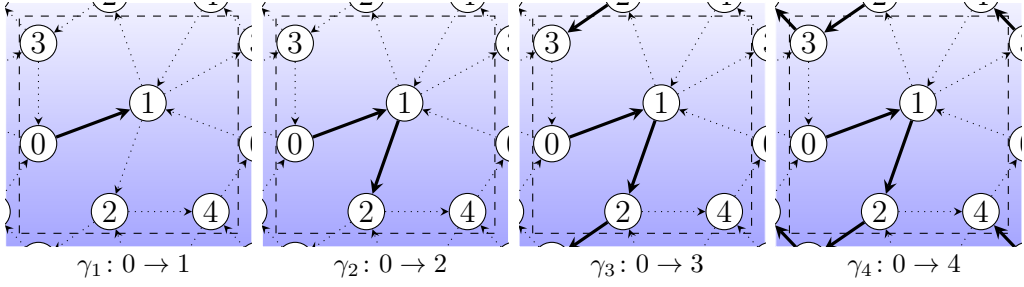

Now, setting $\Es$ to be the restriction of $\Ts$ to $\mathrm{dP}_2$, it is immediate that
\begin{align*}
\Es = \Os \oplus \Os(D_1) \oplus \Os(D_1+D_5) \oplus \Os(D_1 + D_4 +D_5) \oplus \Os(D_1+D_3 +D_4+D_5)
\end{align*}
where, by a mild abuse of notation, $D_i$ denotes the toric divisor in $\mathrm{dP}_2$ associated to the vertex $u_i$, $1 \leq i \leq 5$. Furthermore, by proposition~\ref{prop:tiltingsurface} in appendix~\ref{app:tilting}, $\Es$ is a tilting bundle on $\mathrm{dP}_2$ whose summands are line bundles and hence
$$
(\Os, \Os(D_1), \Os(D_1+D_5), \Os(D_1 + D_4 +D_5), \Os(D_1+D_3 +D_4+D_5))
$$
is a full strong exceptional collection on $\mathrm{dP}_2$. The corresponding quiver is given by
\be\label{acyquiv}
\begin{tikzpicture}[scale=1.0,baseline=-0.1cm, inner sep=1mm, >=stealth]
\node (0) at (0,0) [circle,draw=gray,fill= white] {\footnotesize 0};
\node (1) at (2,0) [circle,draw=gray,fill= white] {\footnotesize 1};
\node (2) at (4,0) [circle,draw=gray,fill= white] {\footnotesize 2};
\node (3) at (6,0) [circle,draw=gray,fill= white] {\footnotesize 3};
\node (4) at (8,0) [circle,draw=gray,fill= white] {\footnotesize 4};
\draw[->, very thick] (0) to [out=10,in=170]  (1) node[midway,above] {$a_{1}$}; 
\draw[->, very thick] (0) to [out=-10,in=190] (1) node[midway,below] {$a_{2}$};
\draw[->, very thick] (1) to [out=315,in=225] (3) node[midway,below] {$a_{5}$}; 
\draw[->, very thick] (2) to [out=45,in=135] (4) node[midway,above] {$a_{7}$}; 
\draw[->, very thick] (1) to [out=10,in=170]  (2) node[midway,above] {$a_{3}$}; 
\draw[->, very thick] (1) to [out=-10,in=190] (2) node[midway,below] {$a_{4}$};
\draw[->, very thick] (2) -- (3) node[midway,above] {$a_{6}$}; 
\draw[->, very thick] (3) -- (4) node[midway,above] {$a_{8}$}; 
\end{tikzpicture} \, .
\ee
If we order the arrows as illustrated above, then the relations\footnote{For a description of how to derive the relations from the collection, see~\cite{MR2464026}.} are
$$
r_1=a_1a_4a_6a_8-a_2a_4a_7\, , \quad r_2=a_2a_5-a_1a_3a_6 \, , \quad r_3=a_3 a_7-a_5a_8 \, .
$$
Thus it follows from~\eqref{Wrel} that the effective superpotential is given by
\begin{align*}
\Wcal_{\infty}&=(\mathbi{a}_1\otimes \mathbi{a}_4\otimes \mathbi{a}_6\otimes \mathbi{a}_8 -\mathbi{a}_2\otimes \mathbi{a}_4\otimes \mathbi{a}_7) \otimes \gbold{\rho}_1 \\
& \qquad + (\mathbi{a}_2\otimes \mathbi{a}_5-\mathbi{a}_1\otimes \mathbi{a}_3\otimes \mathbi{a}_6) \otimes \gbold{\rho}_2+ (\mathbi{a}_3\otimes \mathbi{a}_7-\mathbi{a}_5\otimes \mathbi{a}_8) \otimes \gbold{\rho}_3 \, .
\end{align*}

Now let us pick the internal perfect matching $\pi_{10}$ of figure~\ref{fig:matchings}. Deleting the corresponding arrows in the periodic quiver gives the acyclic quiver:
$$
\begin{tikzpicture}[scale=.55, inner sep=1mm, >=stealth]%
\clip (-0.5,-0.5) rectangle (8.5,8.5);
\shadedraw[top color=blue!5, bottom color=blue!35, draw= white] (-0.5,-0.5) rectangle (8.5,8.5);

\draw[dashed] (0,0) rectangle (8,8);

\node (b1) at (0.3,5) [circle,draw=gray,fill= gray] {};
\node (b1s) at (8.3,5) [circle,draw= gray,fill= gray] {};
\node (b2) at (1.5,1.5) [circle,draw= gray,fill= gray] {};
\node (b2s) at (9.5,1.5) [circle,draw= gray,fill= gray] {};
\node (b2ss) at (9.5,9.5) [circle,draw= gray,fill= gray] {};
\node (b2sss) at (1.5,9.5) [circle,draw= gray,fill= gray] {};
\node (w1) at (2,6) [circle,draw= gray,fill=white] {};
\node (w1ss) at (2,-2) [circle,draw= gray,fill=white] {};
\node (w1s) at (10,6) [circle,draw= gray,fill=white] {};
\node (b3) at (5,0.3) [circle,draw= gray,fill= gray] {};
\node (b3s) at (5,8.3) [circle,draw= gray,fill= gray] {};
\node (w2) at (6,2.5) [circle,draw= gray,fill=white] {};
\node (w2s) at (6,10.5) [circle,draw= gray,fill=white] {};
\node (w2ss) at (-2,2.5) [circle,draw= gray,fill=white] {};
\node (w3) at (7,7) [circle,draw= gray,fill=white] {};
\node (w3s) at (-1,7) [circle,draw= gray,fill=white] {};
\node (w3ss) at (-1,-1) [circle,draw= gray,fill=white] {};
\node (w3sss) at (7,-1) [circle,draw= gray,fill=white] {};

\node (0) at (0.7,7) [circle,draw=black,fill=white] {3};
\node (0u) at (0.7,-1) [circle,draw=black,fill=white] {3};
\node (0r) at (8.7,7) [circle,draw=black,fill=white] {3};
\node (0ur) at (8.7,-1) [circle,draw=black,fill=white] {3};

\node (1) at (0.7,3.3) [circle,draw=black,fill=white] {0};
\node (1r) at (8.7,3.3) [circle,draw=black,fill=white] {0};

\node (2) at (7,0.8) [circle,draw=black,fill=white] {4};
\node (2o) at (7,8.8) [circle,draw=black,fill=white] {4};
\node (2l) at (-1,0.8) [circle,draw=black,fill=white] {4};
\node (2ol) at (-1,8.8) [circle,draw=black,fill=white] {4};

\node (3) at (3.3,0.8) [circle,draw=black,fill=white] {2};
\node (3o) at (3.3,8.8) [circle,draw=black,fill=white] {2};

\node (4) at (4.7,4.8) [circle,draw=black,fill=white] {1};
\node (4u) at (4.7,-3.2) [circle,draw=black,fill=white] {1};
\node (4l) at (-3.3,4.8) [circle,draw=black,fill=white] {1};

\draw[->, very thick] (1) -- (4); 
\draw[->, very thick] (4) -- (3); 
\draw[->, very thick] (3) -- (2); 
\draw[->, very thick] (1r) -- (4); 
\draw[->, very thick] (4) -- (0r); 
\draw[->, very thick] (4) -- (3o); 
\draw[->, very thick] (4u) -- (3); 
\draw[->, very thick] (3) -- (0u); 
\draw[->, very thick] (0ur) -- (2); 
\draw[->, very thick] (1) -- (4l); 
\draw[->, very thick] (3o) -- (0); 
\draw[->, very thick] (4l) -- (0); 
\draw[->, very thick] (0) -- (2ol); 
\draw[->, very thick] (0u) -- (2l);
\draw[->, very thick] (0r) -- (2o);

\draw[-, dotted] (b2) -- (w1); 
\draw[-, dotted] (b2) -- (w2); 
\draw[-, dotted] (w2) -- (b3); 
\draw[-, dotted] (w2) -- (b1s); 
\draw[-, dotted] (w3) -- (b1s); 
\draw[-, very thick, gray] (w3) -- (b3s); 
\draw[-, dotted] (w1) -- (b3s); 
\draw[-, very thick, gray] (w1) -- (b1); 
\draw[-, very thick, gray] (w2) -- (b2s); 
\draw[-, dotted] (w1s) -- (b1s); 
\draw[-, dotted] (w3) -- (b2ss); 
\draw[-, dotted] (w2s) -- (b3s); 
\draw[-, dotted] (w1) -- (b2sss); 
\draw[-, dotted] (b1) -- (w3s); 
\draw[-, dotted] (b1) -- (w2ss); 
\draw[-, very thick, gray] (b2) -- (w2ss); 
\draw[-, dotted] (b2) -- (w1ss); 
\draw[-, dotted] (b2) -- (w3ss); 
\draw[-, dotted] (b3) -- (w1ss); 
\draw[-, very thick, gray] (b3) -- (w3sss);  

\end{tikzpicture}
\qquad
\begin{tikzpicture}[scale=1.0,baseline, inner sep=1mm, >=stealth]
\node (0) at (0,2.2) [circle,draw=gray,fill= white] {\footnotesize 0};
\node (1) at (1.5,2.2) [circle,draw=gray,fill= white] {\footnotesize 1};
\node (2) at (3,2.2) [circle,draw=gray,fill= white] {\footnotesize 2};
\node (3) at (4.5,2.2) [circle,draw=gray,fill= white] {\footnotesize 3};
\node (4) at (6,2.2) [circle,draw=gray,fill= white] {\footnotesize 4};
\draw[->, very thick] (0) to [out=10,in=170]  (1); 
\draw[->, very thick] (0) to [out=-10,in=190] (1);
\draw[->, very thick] (1) to [out=315,in=225] (3); 
\draw[->, very thick] (2) to [out=45,in=135] (4); 
\draw[->, very thick] (1) to [out=10,in=170]  (2); 
\draw[->, very thick] (1) to [out=-10,in=190] (2);
\draw[->, very thick] (2) -- (3); 
\draw[->, very thick] (3) -- (4); 

\end{tikzpicture}
$$
As can be inferred from this picture, the quiver obtained from the exceptional collection is the same as the acyclic quiver which is obtained directly from the tiling. If we label the arrows as shown in~\eqref{acyquiv}, then the brane tiling superpotential is
$$
W=(a_1a_4a_6a_8-a_2a_4a_7)\rho_1 +(a_2a_5-a_1a_3a_6)\rho_2+(a_3 a_7-a_5a_8)\rho_3 \, ,
$$
where $\rho_1$, $\rho_2$ and $\rho_3$ are the deleted arrows corresponding to the perfect matching $\pi_{10}$; compare \cite{fhhu0109063, mr0309191}. We readily see that the expressions for $\mathcal{W}_{\infty}$ and $W$ match if we make the right identifications (in particular, the deleted arrows must correspond to relations in the acyclic quiver).

\subsection*{Acknowledgements}

We are grateful to Andres Collinucci, Alastair Craw, Masahiro Futaki, Alastair King, Calin Lazaroiu, Daniel Murfet, Kazushi Ueda and especially Ilka Brunner for helpful discussions, and we thank Ilka Brunner, Andres Collinucci, Alastair Craw and Ingo Runkel for encouragement to write this note.

\appendix

\section[$\Aa$-algebras]{$\boldsymbol{\Aa}$-algebras}\label{app:Ainfinity}

In this appendix we introduce basic notions of the theory of $\Aa$-algebras as well as the specific constructions that we need for our arguments in the main text. For a broader introduction we refer e.\,g.~to~\cite{k9910179, l0310337}. Note that in the main text we denote the $\Aa$-products~$r_{n}$ by~$m_{n}$ to avoid confusion with the relations~$r_{j}$.

An \textit{$\Ainf$-algebra} is a graded vector space $A$ together with linear maps $r_n:A[1]^{\otimes n}\rightarrow A[1]$ of degree $+1$ for all $n\geq 1$ such that
\be\label{defAinf}
\sum_{\genfrac{}{}{0pt}{}{i\geq 0,j\geq 1,}{i+j\leq n}} r_{n-j+1} \circ \left(\id_{A[1]}^{\otimes i}\otimes r_j \otimes\id_{A[1]}^{\otimes (n-i-j)}\right) = 0
\ee
where $A[1]$ denotes the vector space $A$ with the suspended grading, i.\,e.~if $A$ decomposes into its homogeneous components as $A=\bigoplus_iA_i$, then $A[1]_i=A_{i+1}$. 

The first two $A_\infty$-relations in~(\ref{defAinf}) say that $r_{1}$ squares to zero and is a derivative with respect to the multiplication on~$A$ defined via $\mu(a, b):=(-1)^{|a|+1}r_{2}(a, b)$. One may hence consider the cohomology $H_{r_{1}}(A)$ of $r_{1}$ as we will do below. Furthermore, in the case that all higher products vanish, the $n=3$ equation in~\eqref{defAinf} simply says that~$\mu$ is associative. We thus find that a \textit{differential graded algebra} $(A,r_{1},\mu)$ is the special case of an $\Aa$-algebra with $r_{n\geq 3}=0$. 

If one chooses a basis $\{e_{a}\}$ of~$A$ then the (higher) maps $r_{n}$ are determined by their structure constants $Q^{a}_{a_{1}\ldots a_{n}}$ in 
$$
r_{n}(e_{a_{1}},\ldots, e_{a_{n}})=Q^{a}_{a_{1}\ldots a_{n}} e_{a} \, .
$$
When evaluating the $A_\infty$-relations~(\ref{defAinf}) on $e_{a_{1}}\otimes\ldots\otimes e_{a_{n}}$  these relations pick up sign factors according to the Koszul rule, e.\,g.~$(\id\otimes r_1)(a\otimes b)=(-1)^{|a|+1}a\otimes r_1(b)$. This way we find that~(\ref{defAinf}) is equivalently presented in terms of the structure constants $Q^{a}_{a_{1}\ldots a_{n}}$, which have to satisfy the constraints~\eqref{AinfQ}, 
$$
\sum_{\genfrac{}{}{0pt}{}{i\geq 0,j\geq 1,}{i+j\leq n}}(-1)^{|e_{{a_1}}|+\ldots+|e_{a_i}|+i} Q^a_{a_1\ldots a_iba_{i+j+1}\ldots a_n} Q^b_{a_{i+1}\ldots a_{i+j}} = 0 \, .
$$

An $A_\infty$-algebra $(A,r_n)$ is \textit{minimal} iff $r_1=0$. It is \textit{unital} iff there exists $e\in A[1]_{-1}$ such that $r_2(e, a)=-a$, $r_2(a, e)=(-1)^{|a|+1}a$ for all $a\in A[1]$, and all other products $r_n$ vanish if applied to a tensor product involving $e$. $A$ is \textit{cyclic} with respect to a bilinear form $\langle-,-\rangle$ on $A$ iff
\be\label{cyclicity}
\langle a_0,r_n(a_1,\ldots, a_n)\rangle = (-1)^{(|a_0|+1)(|{a}_1|+\ldots+|{a}_n|+n)} \langle a_1,r_n(a_2,\ldots, a_n\otimes a_0)\rangle
\ee
for all homogeneous elements $a_i\in A$.

An \textit{$\Ainf$-morphism} between $\Ainf$-algebras $A$ and $A'$ is a family of linear maps $F_n:A[1]^{\otimes n}\rightarrow A'[1]$ of degree $0$ for all $n\geq 1$ such that
\be\label{defAinfmorph}
\sum_{p=1}^n \sum_{\genfrac{}{}{0pt}{}{1\leq i_1,\ldots,i_p\leq n,}{i_1+\ldots+i_p=n}} r^{A'}_p \circ \left( F_{i_1}\otimes \ldots\otimes F_{i_p}\right)= \sum_{\genfrac{}{}{0pt}{}{i\geq 0,j\geq 1,}{i+j\leq n}}\! F_{n-j+1} \circ \left(\id^{\otimes i}_{A[1]} \otimes r^A_j\otimes\id^{\otimes (n-i-j)}_{A[1]}\right) .
\ee
$(F_n)$ is an \textit{$\Ainf$-isomophism} iff $F_1$ is an isomorphism, and an \textit{$\Ainf$-quasi-isomorphism} iff $F_1$ induces an isomorphism on cohomology with respect to $r_1$. 

Now we can formulate the fundamental theorem in $\Aa$-theory: 

\begin{thm}[\cite{k0504437,m9809,ks0011041}]
Any $\Ainf$-algebra $(A,r_n)$ is $\Ainf$-quasi-isomorphic to a minimal $\Ainf$-algebra. Such a \textit{minimal model} for $A$ is unique up to $\Ainf$-isomorphisms. 
\end{thm}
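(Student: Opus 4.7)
The plan is to follow the classical homological perturbation / tree-summation approach. First, I would fix a cohomological splitting: choose the graded vector space $H = H^{\bullet}(A, r_1)$, an embedding $i \colon H \hookrightarrow A$ of cocycles representing cohomology classes, a projection $p \colon A \twoheadrightarrow H$, and a degree $-1$ homotopy $h \colon A \to A$ satisfying $\id_A - i\circ p = r_1 h + h r_1$, together with the side conditions $p \circ i = \id_H$, $h \circ i = 0$, $p \circ h = 0$, and $h \circ h = 0$ (which one can always arrange by adjusting $h$). Such a datum exists purely from linear algebra over a field, since the grading and $r_1$ give a splitting of $A$ into cohomology representatives, coboundaries, and a complement.

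Next I would define the higher products $m_n \colon H[1]^{\otimes n} \to H[1]$ for $n \geq 2$ (and $m_1 := 0$) by the tree sum
\begin{equation*}
m_n = \sum_{T \in \mathcal{T}_n} \pm\, m_T \, ,
\end{equation*}
where $\mathcal{T}_n$ is the set of planar rooted trees with $n$ leaves and internal vertices of valency $\geq 3$. The operation $m_T$ is built by decorating each leaf with $i$, each internal vertex of arity $k$ with $r_k$, each internal edge with $h$, and the root with $p$, composed in the evident way. Signs are dictated by the Koszul rule. In parallel I would define maps $F_n \colon H[1]^{\otimes n} \to A[1]$ by the same tree sum but with the root decorated by $h$ instead of $p$, together with $F_1 := i$. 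To verify the $A_\infty$-relations~\eqref{defAinf} for $(H, m_n)$ and the $A_\infty$-morphism relations~\eqref{defAinfmorph} for $(F_n)$, I would argue combinatorially: applying the differential to a tree either hits an internal edge, where the homotopy identity $i p = \id - r_1 h - h r_1$ converts $h$ into a sum of a ``contracted'' term and boundary terms, or contracts a subtree via $r_1$. Summed over all trees, the contracted and produced terms cancel in pairs, leaving precisely~\eqref{defAinf} resp.~\eqref{defAinfmorph}. That $F_1 = i$ is a quasi-isomorphism is immediate from the choice of splitting.

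For uniqueness, suppose $(H, m_n)$ and $(H', m'_n)$ are two minimal models of $A$, with $A_\infty$-quasi-isomorphisms $F \colon H \to A$ and $G \colon H' \to A$. The standard inversion lemma for $A_\infty$-quasi-isomorphisms between minimal algebras reads as follows: since $m_1 = m'_1 = 0$, the linear map $F_1$ being a cohomology isomorphism is exactly the statement that $F_1$ is a vector-space isomorphism, and one can then invert $F$ order by order as an $A_\infty$-morphism (solve for $F^{-1}_n$ in terms of $F^{-1}_{<n}$ using that the leading $F_1 \otimes \cdots \otimes F_1$ term is invertible). Composing $G$ with $F^{-1}$ then gives an $A_\infty$-quasi-isomorphism $H' \to H$ whose first component is an isomorphism, i.e. an $A_\infty$-isomorphism.

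I expect the main obstacle to be the sign bookkeeping in the tree formula and the verification that, after applying the differential and using the homotopy identity, the combinatorial cancellation of terms among different trees works out \emph{with the correct Koszul signs}. In practice this is the step where a careful choice of orientation conventions for trees is essential; once signs are fixed, existence follows by induction on $n$ using the graphical cancellation, and uniqueness reduces to the order-by-order invertibility argument sketched above.
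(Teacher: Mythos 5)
The paper itself does not prove this theorem --- it only cites Kadeishvili, Merkulov and Kontsevich--Soibelman --- and your existence argument is precisely the tree-summation form of the homological perturbation lemma used in those references: choose a contraction $(i,p,h)$ with the standard side conditions, define $m_n$ and $F_n$ by sums over planar rooted trees with internal edges decorated by $h$, and verify the relations \eqref{defAinf} and \eqref{defAinfmorph} through the pairwise cancellations produced by $i\circ p = \id - r_1 h - h r_1$. That half of your proposal is sound, modulo the sign bookkeeping you flag yourself, which is indeed where all the work lies.

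The uniqueness argument, however, has a genuine gap. You propose to invert $F\colon H\to A$ order by order, on the grounds that ``$F_1$ being a cohomology isomorphism is exactly the statement that $F_1$ is a vector-space isomorphism'' when the differentials vanish. But that equivalence requires \emph{both} source and target to be minimal: here $F_1=i$ maps $H=H_{r_1}(A)$ into the generally much larger non-minimal algebra $A$, so it is injective but not surjective, and the order-by-order inversion (which needs the leading $F_1^{\otimes n}$ term to be invertible) does not apply to $F$. Consequently the composite ``$G$ with $F^{-1}$'' does not exist as written. The standard repair is either (a) to also construct the projection-direction $\Ainf$-quasi-isomorphism $P\colon A\to H$ with $P_1=p$ (a second tree formula arising from the same contraction data), so that $P\circ G\colon H'\to H$ is an $\Ainf$-morphism between \emph{minimal} algebras whose linear part is a vector-space isomorphism and which is therefore invertible order by order; or (b) to invoke the general fact that every $\Ainf$-quasi-isomorphism admits an $\Ainf$-quasi-inverse, which is itself a nontrivial statement that your sketch does not establish. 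With either repair the uniqueness claim follows exactly as you intend.
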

This means that the cohomology $H_{r_{1}}(A)$ of an $\Aa$-algebra~$A$ can be endowed with an $\Aa$-structure that is compatible with the one on~$A$. In section~\ref{sec:derivedcategory} we use a special version of this construction due to~\cite{kellerainfinrep} in the special case of a differential graded algebra~$A$. 

An \textit{$\Aa$-bimodule} over an $\Aa$-algebra $(A,r_{n})$ is a graded vector space~$M$ together with linear maps $r_{m,n}^M: A[1]^{\otimes m}\otimes M[1]\otimes A[1]^{\otimes n}\rightarrow M[1]$ of degree +1 for all $m,n\geq 1$ such that
\begin{align*}
& \sum_{\genfrac{}{}{0pt}{}{i,j\geq 0,k\geq 1,}{i+j+k=m}} r_{m-k+1,n}^M \circ \left( \id_{A[1]}^{\otimes i} \otimes r_{k} \otimes \id_{A[1]}^{\otimes j} \otimes \id_{M[1]} \otimes \id_{A[1]}^{\otimes m} \right) \\
& \qquad + \sum_{\genfrac{}{}{0pt}{}{i,j,k,l\geq 0,}{i+k=m,j+l=n}} r_{m-i,n-j}^M \circ \left( \id_{A[1]}^{\otimes k} \otimes r_{i,j}^M \otimes \id_{A[1]}^{\otimes l}\right) \\
& \qquad + \sum_{\genfrac{}{}{0pt}{}{i,j\geq 0,k\geq 1,}{i+j+k=n}} r_{m,n-k+1}^M \circ \left( \id_{A[1]}^{\otimes m} \otimes  \id_{M[1]} \otimes \id_{A[1]}^{\otimes i}\otimes r_{k} \otimes \id_{A[1]}^{\otimes j} \right) = 0 \, .
\end{align*}
Any $\Aa$-algebra $(A,r_{n})$ is a bimodule over itself with $r_{m,n}^A=r_{m+1+n}$. There is also a bimodule structure on the dual $A^\vee=\text{Hom}_{\C}(A,\C)$ given by
\begin{align*}
& r_{m,n}^{A^\vee}(a_{1},\ldots,a_{m},\phi,a_{m+1},\ldots,a_{m+n})(a) \\
& \qquad = (-1)^{(\sum_{i=1}^m|a_{i}|)(|\phi|+\sum_{j=1}^n|a_{m+j}|) + |\phi| + \sum_{i=1}^m |a_{i}|\sum_{j=i+1}^m |a_{j}|} \\
& \qquad\qquad \cdot \phi(r_{m+1+n}(a_{m+1},\ldots,a_{m+n},a,a_{1},\ldots,a_{m})) \, .
\end{align*}

When we prove the formula~\eqref{Wrel} in section~\ref{sec:derivedcategory} we need to compute the $\Aa$-structure on $V\oplus V[-3]^\vee$ for the special case of a minimal $\Aa$-algebra~$(V,r_{n})$. This can be done if we know the $\Aa$-structure on the ``trivial extension'' $\Aa$-algebra $A\oplus M$ and the bimodule $M[-1]$ for any $\Aa$-bimodule~$M$ over~$A$, which is given by
\begin{align*}
& r_{n}^{A\oplus M}\big( (a_{1},v_{1}), \ldots, (a_{n},v_{n}) \big) \\
& = \Big(r_{n}(a_{1},\ldots,a_{n}), \sum_{i=1}^n (-1)^{|a_{1}|+\ldots+|a_{i-1}|+i} r_{i-1,n-i}^{M}(a_{1},\ldots,a_{i-1},v_{i},a_{i+1},\ldots,a_{n}) \Big)
\end{align*}
and
\begin{align*}
& r_{m,n}^{M[-1]}(a_{1},\ldots,a_{m},v,a_{m+1},\ldots,a_{m+n}) \\
& \qquad = (-1)^{|a_{1}|+\ldots+|a_{k}|+k} r_{m,n}^{M}(a_{1},\ldots,a_{m},v,a_{m+1},\ldots,a_{m+n}) \, .
\end{align*}

\section[Brane tiling background]{Brane tiling background}\label{app:tiling}

In this appendix, we review some relevant properties concerning brane tiltings. A full treatment of the subject can be found in~\cite{b0901.4662}. 

A graph $\Gamma$ is \textit{bipartite} if its vertex set $V$ can be partitioned into two sets $V'$ and $V''$ in such a way that no two vertices from the same set are adjacent. In fact a graph being bipartite means that the vertices of $\Gamma$ can be coloured black and white such that each edge connects a black vertex to a white one. A \textit{brane tiling} is a polygonal cell decomposition of a torus $T=\R^2/\Z^2$, whose vertices and edges form a bipartite graph $\Gamma$. 

Given a brane tiling $\Gamma \subset T$ one can consider the dual tiling with vertices dual to faces, edges dual to edges and faces dual to vertices. The crucial fact is that the edges of the dual tiling inherit a consistent choice of orientation. Therefore the dual graph is a quiver $Q$, with the additional structure that it provides a tiling of the torus $T$ with oriented faces. We shall refer to the faces of the quiver dual to black (resp.~white) vertices of the brane tiling as black (resp.~white) faces.

As usual, we denote by $Q_0$ the set of vertices and by $Q_1$ the set of arrows of~$Q$. To this information we add the set $Q_2$ of oriented faces. Let $h(a)$ and $t(a)$ denote the head and tail vertices of an arrow $a \in Q_1$. Then there exists a chain complex 
$$
\Z^{Q_2} \stackrel{d_{2}}{\longrightarrow} \Z^{Q_1} \stackrel{d_{1}}{\longrightarrow} \Z^{Q_0} \, ,
$$
where $d_{1}$ is defined by $d_{1}(a)=h(a)-t(a)$, and $d_{2}$ is defined by $d_{2}(f)=\sum_{a \in f} a$. Following~\cite{m0908.3475}, we define the group
$$
G = \Z^{Q_1} / (d_{2}(f)-d_{2}(f') \mid f,f' \in Q_2)
$$
and let $w \colon \Z^{Q_1} \fl G$ be the natural projection. 

Now let $\C Q$ denote the path algebra of $Q$. Then the vector space $\C Q/[\C Q,\C Q]$ consists of all oriented cycles in the quiver $Q$. The consistent orientation of any face $f \in Q_2$ of the quiver means that we may interpret $\partial f$ as an oriented cycle in the quiver. Thus, we can define an element of $W \in \C Q/[\C Q,\C Q]$ by letting
$$
W = \sum_{f \in Q_2} (-1)^f \partial f \,,
$$
where $(-1)^f$ takes value $+1$ on white faces of $Q$, and $-1$ on black faces. This is called the superpotential attached to the brane tiling.

For each arrow $a \in Q_1$ there is a ``derivation'' $\partial_a \colon \C/[\C Q,\C Q] \fl \C Q$ that takes any occurrences of the arrow $a$ in an oriented cycle and removes them leading to a path from $h(a)$ to $t(a)$. Then the superpotential $W$ determines an ideal of relations 
$$
J_W = (\partial_a W\mid a \in Q_1)
$$
in the path algebra $\C Q$; this is usually called the Jacobi ideal. The quotient of the path algebra $\C Q$ by this ideal is the superpotential algebra
\be\label{Jacalg}
A =\C Q/J_W \,.
\ee
It was proved in~\cite{b0901.4662} that under certain consistency conditions on the brane tiling the algebra $A$ is a three-dimensional Calabi-Yau algebra.

A \textit{perfect matching} $\pi$ in a brane tiling $\Gamma \subset T$ is a choice of edges of $\Gamma$ such that each vertex of $\Gamma$ is adjacent to exactly one of the edges. We denote by $P$ the set of all perfect matchings in $\Gamma \subset T$. For any given $\pi \in P$, define the characteristic function $\chi_{\pi}\colon \Z^{Q_1} \fl \Z$ by 
$$
\chi_{\pi}(a)=\left\{
\begin{array}{rl}
1 & \text{if } a\in\pi,\\
0 & \text{if } a  \notin\pi,
\end{array} \right.
$$
where $a\in\pi$ means that~$a$ crosses an edge in~$\pi$. It is plain that $\chi_{\pi}(d_{2}(f))=0$ for all $f \in Q_2$. This implies that $\chi_{\pi}$ induces a well-defined linear map $G \fl \Z$.

\section[A tilting bundle on~$Z$]{A tilting bundle on~$\boldsymbol{Z}$}\label{app:tilting}

Throughout this appendix $Z$ denotes a weak toric Fano surface and $X=\utot(\omega_{Z})$ is the total space of the canonical bundle of $Z$, which we view as a quasi-projective variety with a fibration $\pi \colon X \fl Z$. We shall again let $\iota\colon Z \hookrightarrow X$ denote the inclusion of the zero section. Our aim is to show that the restriction of a tilting bundle on $X$ to the zero section $Z \subset X$ is a tilting bundle. Assertions of this sort have already been considered in \cite[Lem.~7.1]{iu0911.4529}.
 
Before focussing on our specific problem, we record the following useful observation. Let $T$ be a compact object in a compactly generated triangulated category $\Ds$. To say that $\Ds$ is generated by $T$ is to say that the smallest triangulated subcategory of $\Ds$ closed under coproducts that contains~$T$ (and is closed under taking direct summands) coincides with the full subcategory of $\Ds$ consisting of the compact objects. A deep result of~\cite[Thm.~2.1.2]{MR1996800} implies that the latter condition is equivalent to the following one:~for any $E \in \Ds$,
$$
\uHom(T,E[k])=0 \,\,\,\text{for all $k\in \Z$} \quad \Rightarrow \quad E\cong 0 \, .
$$
Lastly, it should be mentioned that the unbounded derived category of quasicoherent sheaves on a separated, Noetherian, regular scheme of finite Krull dimension is compactly generated. In this category a compact object is a perfect complex.

Now let us come back to the situation at hand.

\begin{prop}\label{prop:tiltingsurface}
Let $\Ts$ be a tilting bundle on $X$. Then $\Es = \iota^* \Ts$ is a tilting bundle on $Z$. 
\end{prop}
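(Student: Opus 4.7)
The plan is to verify the two substantive conditions in the definition of a tilting bundle for $\Es = \iota^*\Ts$: generation of $\Dd^b(\uCoh(Z))$ and vanishing of higher self-extensions. (Local freeness of $\Es$ is immediate since $\Ts$ is locally free and $\iota$ is a closed immersion.) Both checks will be essentially formal, modulo one geometric input coming from the fact that $X$ is the total space of a line bundle over $Z$.

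For generation I would argue purely by adjunction, using $\Ld\iota^* \dashv \iota_*$ together with $\Rd\iota_* = \iota_*$ (closed immersion) and $\Ld\iota^*\Ts = \iota^*\Ts$ (locally free). If $\Fs \in \Dd^b(\uCoh(Z))$ satisfies $\Rd\uHom_Z(\Es,\Fs) = 0$, then
\begin{equation*}
\Rd\uHom_X(\Ts, \iota_*\Fs) \;\cong\; \Rd\uHom_Z(\iota^*\Ts, \Fs) \;=\; 0.
\end{equation*}
Since $\Ts$ generates $\Dd^b(\uCoh(X))$ this forces $\iota_*\Fs \cong 0$, whence $\Fs \cong 0$ by conservativity of $\iota_*$. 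The characterisation of generation recorded just before the proposition then gives generation of $\Dd^b(\uCoh(Z))$ by $\Es$.

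For the Ext vanishing I would combine two facts about $\pi\colon X = \utot(\omega_Z) \to Z$: that $\pi$ is affine with $\pi_*\Os_X \cong \bigoplus_{k\ge 0}\omega_Z^{-k}$, and that $\uPic(X) \cong \pi^*\uPic(Z)$ because $X$ is the total space of a line bundle over $Z$. The second identification yields $\Ts \cong \pi^*\Es$ whenever the summands of $\Ts$ are line bundles, which is precisely the case in the applications of section~\ref{sec:derivedcategory}. With these inputs, the projection formula together with the vanishing $R^q\pi_* = 0$ for $q>0$ delivers
\begin{equation*}
\uExt^i_X(\Ts,\Ts) \;=\; H^i\bigl(Z, \pi_*\uEnd(\Ts)\bigr) \;=\; \bigoplus_{k\ge 0} H^i\bigl(Z, \uEnd(\Es)\otimes\omega_Z^{-k}\bigr).
\end{equation*}
Since $\Ts$ is tilting, the left-hand side vanishes for $i>0$, so every summand on the right vanishes; the $k=0$ summand is exactly $\uExt^i_Z(\Es,\Es)$, giving the required vanishing.

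The main obstacle is promoting the identification $\Ts \cong \pi^*\Es$ from the line-bundle case to that of a general tilting bundle. For the purposes of section~\ref{sec:derivedcategory} this is not a real issue, since there $\Ts$ is by construction a direct sum of line bundles. If one wants to keep the appendix-level statement in full generality, one can either invoke $\Ax^1$-homotopy invariance of vector bundles along a vector-bundle projection $\pi$, or bypass the identification altogether by applying $\Rd\uHom_X(\Ts,-)$ to the Koszul resolution $0 \to \pi^*\omega_Z^{-1} \to \Os_X \to \iota_*\Os_Z \to 0$ tensored with $\Ts$, and organising the resulting contributions via the weight decomposition on $\pi_*\uEnd(\Ts)$ coming from the fibrewise $\C^{\times}$-scaling action on $X$.
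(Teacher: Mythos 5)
Your generation argument coincides with the paper's: adjunction $\Ld\iota^*\dashv\iota_*$, the Bondal--van den Bergh criterion recorded before the proposition, and conservativity of $\iota_*$ for a closed immersion. For the Ext vanishing, however, you take a genuinely different route. The paper never pushes forward along $\pi$: it resolves $\iota_*\Os_Z$ by the Koszul complex $\{\pi^*\omega_Z^{-1}\to\Os_X\}$, identifies $\Rd\uHom^{\bullet}(\Es,\Es)$ with the hypercohomology of $\{\Ts^{\vee}\otimes\Ts\otimes\pi^*\omega_Z^{-1}\to\Ts^{\vee}\otimes\Ts\}$, and reduces the claim to $H^{k>0}(X,\Ts^{\vee}\otimes\Ts)=0$ --- i.e.\ precisely the second alternative you mention in your last sentence but do not carry out. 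Your route via $\pi_*\Os_X\cong\bigoplus_{k\ge 0}\omega_Z^{-k}$ and the projection formula is arguably more transparent, since it exhibits $\uExt^i_Z(\Es,\Es)$ as the weight-zero summand of $\uExt^i_X(\Ts,\Ts)$ (and, incidentally, this same weight decomposition is what justifies discarding the $\pi^*\omega_Z^{-1}$-term in the paper's last step). But it is only available once you know $\Ts\cong\pi^*\Es$.

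That identification is where the genuine gap sits. For $\Ts$ a sum of line bundles it follows from $\uPic(X)\cong\pi^*\uPic(Z)$ with inverse $\iota^*$, so your argument does cover the situation actually used in section~\ref{sec:derivedcategory}. The proposition, however, is stated for an arbitrary tilting bundle, and the escape hatch you propose --- $\Ax^1$-homotopy invariance of vector bundles along the projection $\pi$ --- is not available here: $Z$ is projective, and vector bundles of rank $\ge 2$ on $Z\times\Ax^1$ (or on the total space of a line bundle over $Z$) need not be extended from $Z$; jumping families such as $\Os\oplus\Os$ degenerating to $\Os(1)\oplus\Os(-1)$ on $\PP^1\times\Ax^1$ already obstruct this. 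Only $\uPic$ is homotopy invariant in this setting, which is exactly why the rank-one case goes through. So to prove the statement in the stated generality you would have to actually execute the Koszul-resolution computation rather than cite it as an option.
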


\begin{proof}
To begin with, we need to ensure that $\Es=\iota^* \Ts$ has no higher self-extensions. The adjoint property of $\Ld \iota^*$ and $\iota_*$ together with the projection formula shows that
\begin{align*}
\Rd \uHom^{\bullet}(\Es,\Es) &= \Rd \uHom^{\bullet}(\Ld \iota^* \Ts,\Ld \iota^* \Ts) \\
            &\cong \Rd \uHom^{\bullet}(\Ts,\iota_*\Ld \iota^* \Ts)\\
            &\cong \Rd \uHom^{\bullet}(\Ts,\Ts \otimes^{\Ld} \iota_* \Os_Z) \, .
\end{align*}
Using the tautological exact sequence
$$
0 \longrightarrow \pi^*\omega_Z^{-1}\longrightarrow \Os_X \longrightarrow \iota_* \Os_Z \, ,
$$
we obtain that the complex $\{ \pi^*\omega_Z^{-1}\fl \Os_X\}$ is a resolution of $\iota_* \Os_Z$ by locally free sheaves. Therefore we find that 
\begin{align*}
\Rd \uHom^{\bullet}(\Es,\Es) &\cong \Rd \uHom^{\bullet}(\Ts,\Ts \otimes^{\Ld} \{ \pi^*\omega_Z^{-1}\fl \Os_X\})\\
&\cong \Rd \uHom^{\bullet}(\Ts,\{\Ts \otimes \pi^*\omega_Z^{-1}\fl \Ts\}) \\
& \cong \Rd \Gamma(\Ts^{\vee} \otimes^{\Ld}\{\Ts \otimes \pi^*\omega_Z^{-1}\fl \Ts\}) \\
& \cong \Rd \Gamma(\{\Ts^{\vee}\otimes\Ts \otimes \pi^*\omega_Z^{-1}\fl \Ts^{\vee}\otimes\Ts\}) \\
& \cong \textstyle\bigoplus_k H^k(X, \{\Ts^{\vee}\otimes\Ts \otimes \pi^*\omega_Z^{-1}\fl \Ts^{\vee}\otimes\Ts\})[-k] \\
& \cong \textstyle\bigoplus_k H^k(X, \Ts^{\vee}\otimes\Ts \otimes \iota_* \Os_Z)[-k]. 
\end{align*}
Since $\Ts$ is a tilting bundle, we have $H^k(X, \Ts^{\vee}\otimes\Ts)=\uExt^k_X(\Ts,\Ts)=0$ for all $k>0$, and thus $H^k(X, \Ts^{\vee}\otimes\Ts \otimes \iota_* \Os_Z)=0$ for all $k>0$, so that indeed
$$
\uExt^k_Z(\Es,\Es)=H^k(\Rd \uHom^{\bullet}(\Es,\Es))=0 \quad \text{for $k > 0$} \, .
$$
 
Finally we have to prove that $\Es$ generates $\Dd^b(\uCoh(Z))$.  The above remarks together with the adjoint property of $\Ld \iota^*$ and $\iota_*$ show that for any object $E \in \Dd^b(\uCoh(Z))$
$$
\uHom(\Es,E[k])=0 \,\,\,\text{for all $k\in \Z$} \quad \Rightarrow \quad \iota_*E\cong 0 \, .
$$
Since $\iota$ is a closed immersion, one obtains $E \cong 0$. The desired conclusion now follows from the foregoing remarks. 
\end{proof}

\end{document}